\providecommand{\subparagraph}{\paragraph}
\let\noAP\AP
\newif\ifknowledge
  \def\intro{\@ifstar\klintro\klintro}
  \newcommand\klintro[2][]{\ifmmode\mathit{#2}\else\emph{#2}\fi}
  \newcommand\reintro[2][]{#2}
  \newcommand\kl[2][]{#2}
  \newcommand\AP\relax
\let\knowledgeAP\AP
\let\AP\noAP
\let\oldsec\S
\def\isnum#1{%
  \if!\ifnum9<1#1!\else_\fi
    \ComplexityFont{#1\text{-}}\else#1\fi}
\renewcommand{\EXP}[1][]{{\isnum{#1}\ComplexityFont{EXPTIME}}}
\newclass{\TOWER}{TOWER}
\renewcommand{\S}{\oldsec}
\newrobustcmd{\qedmath}{\tag*{\qed}}
\newrobustcmd{\tuple}[1]{(#1)}
\newrobustcmd{\lossy}[1]{\mathsf{lossy}\!\left(#1\right)}
\newrobustcmd{\capped}[2][i]{\mathsf{capped}^{#1}_{\vec
    b}\!\left(#2\right)}
\newrobustcmd{\avg}[1]{\mathsf{avg}(#1)}
\newrobustcmd{\tsum}{\textstyle{\sum}}
\newrobustcmd{\vsp}[1]{\mathsf{span}(#1)}
\newrobustcmd{\cone}[1]{\mathsf{cone}(#1)}
\newrobustcmd{\ohsp}[2]{\vsp{#1}^{\rightarrow}_{#2}}
\let\normsymbol\|
\newrobustcmd{\eqby}[1]{\stackrel{\text{{\tiny{#1}}}}{=}}
\newrobustcmd{\eqdef}{\eqby{def}}
\renewcommand{\vec}[1]{\mathbf{#1}}
\newrobustcmd\phs{\mathop{\kl[\command least common prefix]{\mathrm{lcp}}}}
\newrobustcmd\phspath{\mathop{\kl[\command least common prefix path]{\mathrm{lcp}}}}
\newrobustcmd\weight{\vec w}
\newrobustcmd\leqpref{\mathrel{\kl[\command prefix ordering]{\leq^{\text{pref}}}}}
\newrobustcmd\lpref{\mathrel{\kl[\command prefix ordering]{<^{\text{pref}}}}}
\newrobustcmd\gamefont{\mathtt}
\newrobustcmd\PHSgame{\kl[\command perfect half space game]{\gamefont{PHS}}}
\newrobustcmd\LexEngame{\kl[\command lexicographic energy game]{\gamefont{LexEn}}}
\newrobustcmd\Bndgame{\kl[\command bounding game]{\gamefont{Bnd}}}
\newrobustcmd\ExtEngame{\kl[\command extended energy game]{\gamefont{ExtEn}}}
\newrobustcmd\EnPgame{\kl[\command energy parity game]{\gamefont{EnPar}}}
\newrobustcmd\FCgame{\kl[\command first cycle game]{\gamefont{FC}}}%
\newrobustcmd\MPgame{\kl[\command mean payoff]{\gamefont{MP}}}
\let\symbolprec\prec
\newrobustcmd\newprec{\mathrel{\kl[\command lexicographic ordering]{\symbolprec}}}
\let\prec\newprec
\let\symbolpreceq\preceq
\newrobustcmd\newpreceq{\mathrel{\kl[\command lexicographic ordering]{\symbolpreceq}}}
\let\preceq\newpreceq
\let\symbolsucc\succ
\newrobustcmd\newsucc{\mathrel{\kl[\command lexicographic ordering]{\symbolsucc}}}
\let\succ\newsucc
\let\symbolsucceq\succeq
\newrobustcmd\newsucceq{\mathrel{\kl[\command lexicographic ordering]{\symbolsucceq}}}
\let\succeq\newsucceq
\let\symboldagger\dagger
\newrobustcmd\newdagger{{\kl[\symbol dagger]{\symboldagger}}}
\let\dagger\newdagger
\let\symbolddagger\ddagger
\newrobustcmd\newddagger{{\kl[\symbol ddagger]{\symbolddagger}}}
\let\ddagger\newddagger
\newif\ifshort
\providecommand{\@IEEEsectpunct}{}
\providecommand{\nopunct}{\def\@IEEEsectpunct{\ }}
\renewcommand{\cite}{\citep}
\def\bibfont{\footnotesize}
\def\NAT@spacechar{~}%
\newcounter{NAT@nlabs}
\tikzstyle{triangle}=[rounded corners=.5,regular polygon,regular polygon
\tikzstyle{square}=[rounded corners=.5,regular polygon,regular polygon
\tikzstyle{every node}=[font=\small]
\tikzstyle{every edge}=[draw,>=stealth',shorten >=1pt,semithick]
\tikzstyle{accepting}=[accepting by arrow]
\tikzstyle{initial}=[initial by arrow,initial text=]
\tikzstyle{state}=[draw=black!70,very thick,fill=black!20,circle,inner
\declaretheorem[numberwithin=section]{theorem}
\declaretheorem[sibling=theorem]{proposition}
\declaretheorem[sibling=theorem]{lemma}
\declaretheorem[sibling=theorem]{corollary}
\declaretheorem[sibling=theorem]{definition}
\declaretheorem[sibling=theorem]{fact}
\declaretheorem[sibling=theorem,style=remark,qed=\qedsymbol]{example}
\declaretheorem[sibling=theorem]{claim}
\crefname{claim}{Claim}{Claims}
\Crefname{claim}{Claim}{Claims}
\crefname{fact}{Fact}{Facts}
\Crefname{fact}{Fact}{Facts}
\crefname{figure}{Figure}{Figures}
\Crefname{figure}{Figure}{Figures}
\let\AP\knowledgeAP
\begin{document}
\renewcommand{\sectionautorefname}{Section}
\renewcommand{\subsectionautorefname}{Section}
\renewcommand{\subsubsectionautorefname}[1]{\S}
\title{Perfect Half Space Games}
\author[Th.~Colcombet]{Thomas Colcombet}
\address{IRIF, CNRS \& Universit\'e Paris-Diderot, France\vspace*{-1em}}
\author[M.~Jurdzi\'nski]{Marcin Jurdzi\'nski}%
\author[R.~Lazi\'c]{Ranko Lazi\'c} \address{DIMAP, Department of
  Computer Science, University of Warwick, UK}
\author[S.~Schmitz]{Sylvain Schmitz} \address{LSV, ENS Paris-Saclay \&
CNRS \& INRIA, Universit\'e Paris-Saclay, France}

\begin{abstract}
We introduce \kl{perfect half space games}, in which the goal of
Player~2 is to make the sums of encountered multi-dimensional weights
diverge in a direction which is consistent with a chosen sequence of
perfect half spaces (chosen dynamically by Player~2).  We establish
that the \kl{bounding games} of Jurdzi\'nski et al.~(ICALP~2015) can
be reduced to \kl{perfect half space games}, which in turn can be
translated to the \kl{lexicographic energy games} of Colcombet and
Niwi\'nski, and are \kl{positionally determined} in a strong sense
(Player~2 can play without knowing the current perfect half space).
We finally show how \kl{perfect half space games} and \kl{bounding
games} can be employed to solve \kl{multi-dimensional energy parity
games} in pseudo-polynomial time when both the numbers of energy
dimensions and of priorities are fixed, regardless of whether
the \kl{initial credit} is given as part of the input or existentially
quantified.  This also yields an optimal {\footnotesize\textsf{2-EXPTIME}}
complexity with given \kl{initial credit}, where the best known upper
bound was non-elementary.

\end{abstract}
\maketitle

\section{Introduction}

A \intro{$d$-dimensional energy game}~\citep{brazdil10,velner15} sees
two players compete in a finite game graph, whose edges are decorated
with vectors of weights in $\+Z^d$.  The $d$ weights represent various
discrete resources that can be consumed or replenished by the actions
of the game.  The objective of Player~1, given an \kl{initial credit} in
$\+N^d$, is to play indefinitely without depleting any of the
resources---more precisely to keep the current sum of encountered
weights plus \kl{initial credit} non-negative in every dimension---while
Player~2 attempts to foil this.
The primary motivation for these games is controller synthesis for
resource-sensitive reactive systems, where they are also closely
related to \kl{multi-dimensional mean-payoff games}---and actually
equivalent if finite-memory strategies are sought for the
latter~\citep[Lemma~6]{velner15}.  But they appear in diverse settings:
for example, in process algebra, they are equivalent to the simulation
problem between a finite state system and a Petri net or a basic
parallel process~\citep[Propositions~6.2 and~6.4]{courtois14}; in artificial
intelligence, they allow to solve the model-checking problem for the
resource-bounded logic RB$\pm$ATL~\citep{alechina17,alechina16}.

The algorithmic issues surrounding \kl{multi-dimensional energy games} have
come under considerable scrutiny.  Deciding whether there exists an
\kl{initial credit} that would allow Player~1 to win
is \coNP-complete~\citep[Theorem~3]{velner15}, while the complexity when
the \kl{initial credit} is given as part of the input
becomes \EXP[2]-complete~\citep{courtois14,jurdzinski15}.
Finally, both decision problems are in pseudo-polynomial time when
$d$ is fixed~\citep{jurdzinski15}.

\subsubsection*{Open Questions}
However, these recent advances do not settle the case of
\kl{multi-dimensional energy parity games}~\citep{chatterjee14},
where Player~1 must ensure that, in addition to the quantitative
energy objective (specifying resource consumption and replenishment),
she also complies with a qualitative $\omega$-regular objective in the
form of a parity condition (specifying functional requirements).
These games with arbitrary \kl{initial credit} are
still \coNP-complete as a consequence
of~\citep[Lemma~4]{chatterjee14}.  With given \kl{initial credit},
they were first proven decidable by \citet*{abdulla13}, and used to
decide both the model-checking problem for a suitable fragment of the
$\mu$-calculus against Petri net executions and the \emph{weak}
simulation problem between a finite state system and a Petri net; they
also allow to decide the model-checking problem for the resource logic
RB$\pm$ATL$^*$~\citep{alechina16}.  As shown by \citet{jancar15},
$d$-dimensional energy games using $2p$ priorities can be reduced
to \kl{`extended' multi-dimensional energy games} of dimension
$d'\eqdef d+p$, with complexity upper bounds shown earlier
by \citet*{brazdil10} to be in
\EXP[$(d'-1)$-] when $d'\geq 2$ is fixed, and in \TOWER\ when $d'$ is
part of the input, %
leaving a substantial complexity gap with the
\EXP[2]-hardness shown in~\citep{courtois14}.

\subsubsection*{Contributions}
We introduce in \cref{sec-phs} \emph{\kl{perfect half space games}}, both
\begin{itemize}
\item as intermediate objects in a chain of reductions from
multi-dimensional energy parity games to mean-payoff games
(see \cref{fig:reds}), allowing us to derive new tight complexity
upper bounds based on recent advances by \citet*{comin16} on the
complexity of mean-payoff games, and
\item as a means to gain a deeper understanding of how winning
strategies in energy games are structured.
\end{itemize}

More precisely, in perfect half-space games, positions are pairs: a
vertex from a \kl{$d$-dimensional game graph} as above, together with
a $d$-dimensional \emph{\kl{perfect half space}}.  The latter is a
maximal salient blunt cone in $\+Q^d$: a union of open half spaces of
dimensions $d$, $d - 1$, \ldots, 1, where each is contained in the
boundary of the previous one.  In these games, Player~1 may not change
the \kl{current perfect half space}, but Player~2 may change it
arbitrarily at any move.  However, the goal of Player~2 is to make the
sums of encountered weights diverge in a direction which is consistent
with the chosen \kl{perfect half spaces}; thus the greater the
dimension of the component open half spaces that Player~2 varies
infinitely often, the harder it is for him to win.  For example, with
$d = 2$, if Player~2 eventually settles on the perfect half space that
consists of the half plane $x < 0$ and the half line $x = 0 \,\wedge\,
y < 0$, then he wins provided the sequence of total weights is such
that either their $x$-coordinates diverge to $-\infty$, or their
$x$-coordinates do not diverge to $+\infty$ and their $y$-coordinates
diverge to $-\infty$; if however Player~2 switches between the two
half lines of $x = 0$ infinitely often, then he can only win in the
former manner.

Firstly, we show that perfect half space games can be easily
translated to the \reintro{lexicographic energy games} of 
\citet*{colcombet17}.  The translation amounts to normalising the edge
weights with respect to the current perfect half spaces, and inserting
another $d$ dimensions in which we encode appropriate penalties for
Player~2 that are imposed whenever he changes the perfect half space
(cf.\ \cref{sub-tr-lmp}).  We deduce that \kl{perfect half space games} are
\kl{positionally determined}, and moreover that Player~2 has winning
strategies that are \kl{oblivious} to the current perfect half space.
Along the way, we provide in \cref{sub-lmp} a proof of the \kl{positional
determinacy} of \kl{lexicographic energy games}, along with pseudo-polynomial
complexity upper bounds for their decision problem when $d$ is fixed,
based on the recent results of \citet{comin16} for \kl{mean-payoff games}.

Secondly, we establish that \kl{perfect half space games} capture
\reintro{bounding games} (cf.\ \cref{sec-bnd}).  The latter were central
to obtaining the tight complexity upper bounds for \kl{multi-dimensional
energy games}~\citep{jurdzinski15}.  They are played purely on the
\kl{$d$-dimensional game graphs} and have a simple winning
condition: the goal of Player~1 is to keep the total absolute value of weights bounded
(i.e., contained in some $d$-dimensional hypercube).  One reading of
this reduction is that whenever Player~2 has a winning
strategy in a bounding game, he has one that `announces' at every move
some \kl{perfect half space} and succeeds in forcing the total weights to
be unbounded in a direction consistent with the infinite sequence of
his announcements.  The proof  is difficult, and
relies on a construction from the previous
paper~\citep{jurdzinski15} of a winning strategy for Player~1 in the
\kl{bounding game} given her winning strategy in a \kl{first-cycle game}
featuring \kl{perfect half spaces}.  Composing this with our complexity
bounds for \kl{lexicographic energy games} gives us a new approach to
solving \kl{bounding games}, improving the time complexity from the
previously best
$(|V| \cdot \|E\|)^{O(d^4)}$~\citep[Corollary~3.2]{jurdzinski15} to
$(|V| \cdot \|E\|)^{O(d^3)}$, where $V$ is the set of vertices and
$\|E\|$ the maximal absolute value over the weights in the input
\kl{multi-dimensional game graph} (cf.\ \cref{cor-bnd}).

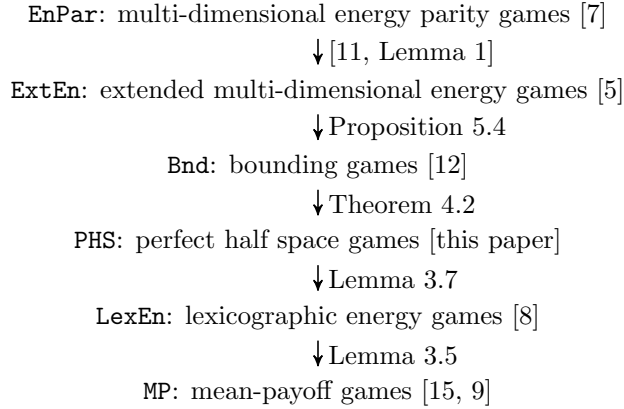
\begin{figure}[tbp]
\centering
\begin{tikzpicture}[on grid,auto]
  \node(epg){$\EnPgame$: "multi-dimensional energy parity games"~\citep{chatterjee14}};
  \node[below=of epg](eeg)
    {$\ExtEngame$: "extended multi-dimensional energy games"~\citep{brazdil10}};
  \node[below=of eeg](bg)
    {$\Bndgame$: "bounding games"~\citep{jurdzinski15}};
  \node[below=of bg](phs)
    {$\PHSgame$: "perfect half space games"~[this paper]};%
  \node[below=of phs](leg)
    {$\LexEngame$: "lexicographic energy games"~\citep{colcombet17}};
  \node[below=of leg](mpg)
    {$\MPgame$: "mean-payoff games"~\citep{zwick96,comin16}};
  \path[->]
    (epg) edge node{\citep[Lemma~1]{jancar15}} (eeg)
    (eeg) edge node{\cref{pr:aic}} (bg)
    (bg) edge node{\cref{th:eq.bounding}} (phs)
    (phs) edge node{\cref{th:tr.lmp.1}} (leg)
    (leg) edge node{\cref{lem:lmpg-positional}} (mpg);
\end{tikzpicture}
\caption{\label{fig:reds}The reductions between the various games in
  this paper.}
\end{figure}

Thirdly, building on \citeauthor{jancar15}'s reduction, we show how
\kl{multi-dimensional energy parity games} can be solved by means of
bounding games (cf.\ \cref{sec-mep}).  %
For the \kl{given initial credit} problem, we
obtain \EXP[2]-completeness, closing the aforementioned complexity
gap.  When the dimension $d$ and the number of priorities $2p$ are
fixed, %
we obtain that, for both
arbitrary and \kl{given initial credits}, the winner is decidable in
pseudo-polynomial time.  With \kl{arbitrary initial credit}, our new
bound $(|V|\cdot\|E\|)^{O((d+p)^3 \log(d+p))}$ improves when $p=0$
over the previously best $(|V|
\cdot\|E\|)^{O(d^4)}$~\citep[Theorem~3.3]{jurdzinski15}.

\subsubsection*{Structure of the Paper}  The chain of reductions we
use in this paper is depicted in \cref{fig:reds}, and we shall
essentially work our way up through it.  In \cref{sec-phs} we
introduce \kl{multi-dimensional game graphs} and \kl{perfect half
space games}. In \cref{sec-lmp} we show how to employ \kl{lexicographic
energy games} %
for solving \kl{perfect half space
games}.  We apply these results to \kl{bounding games}
in \cref{sec-bnd} and \kl{multi-dimensional energy parity games}
in \cref{sec-mep}, before concluding.

\section{Perfect Half Space Games}\label{sec-phs}

\subsection{\kl{Multi-Weighted Game Graphs}}

We consider \intro{multi-dimensional game graphs} whose edges are labelled by \intro{multi-weights}, which are
vectors of integers.  They are tuples of the form $\tuple{V, E, d}$,
where $d$ is the dimension in $\+N_{>0}$, $V\eqdef V_1\uplus V_2$ is a
finite set of vertices partitioned into Player~$1$ vertices and
Player~$2$ vertices, and $E$ is a finite set of edges included in
$V \times\+Z^d\times V$, such that every vertex has at least one
outgoing edge.  We may write just `\intro{weight}' instead of `\kl{multi-weight}' 
when there is no risk of confusion, and also $v\xrightarrow{\vec w}v'$ 
to denote an edge $(v,\vec w,v')$. Given a path~$P$ in the game,
we denote by $\weight(P)$ the sum of the \kl{weights} encountered.

\AP 
For a vector $\vec w$ in $\+Z^d$, we let $\intro*\|\vec w\reintro*\|\eqdef\max_{1\leq
i\leq d}|\vec w(i)|$ denote its infinity norm; we define the norm
$\|E\|\eqdef\max_{v\xrightarrow{\vec w}v'\in E}\|\vec w\|$ as the
maximum of the norms of edge weights.  We assume
all our integers to be encoded in binary, hence $\|E\|$ might be
exponential in the size of the multi-weighted game graph.

Without loss of generality, we assume that the players
strictly alternate ($v\xrightarrow{\vec w}v'$ in $E$ implies $v$ in
$V_i$ and $v'$ in $V_{3-i}$ for some $i$ in $\{1,2\}$), the weight of
every edge is determined by its vertices ($v\xrightarrow{\vec w}v'$
and $v\xrightarrow{\vec w'}v'$ in $E$ implies $\vec w=\vec w'$), and
not all weights are zero ($\|E\|>0$).

\begin{example}\label{ex:wgg}
  \Cref{fig-mean} shows on its left-hand-side an example of a
  2-dimensional weighted game graph.  Throughout this paper,
  Player~$1$ vertices are depicted as triangles and Player~$2$ vertices as
  squares.
\end{example}%

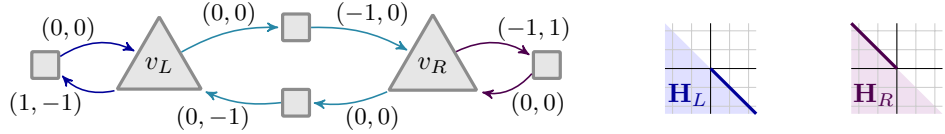
\begin{figure*}[tbp]
  \centering
  \begin{tikzpicture}[auto,on grid]
    \node[triangle](L){$v_L$};
    \node[triangle, right=3.6cm of L](R){$v_R$};
    \node[square,above right=.5 and 1.8 of L](A){};
    \node[square,below right=.5 and 1.8 of L](B){};
    \node[square, left=1.5 of L](L1){};
    \node[square, right=1.5 of R](R1){};
    \path[->,every node/.style={font=\footnotesize,inner sep=.5pt}]
      (L) edge[bend left=15,draw=black!40!cyan,pos=0.8] node{$\tuple{0,0}$} (A)
      (A) edge[bend left=15,draw=black!40!cyan,pos=0.2] node{$\tuple{-1,0}$} (R)
      (R) edge[bend left=15,draw=black!40!cyan,pos=0.6] node{$\tuple{0,0}$} (B)
      (B) edge[bend left=15,draw=black!40!cyan,pos=0.35] node{$\tuple{0,-1}$} (L)
      (L) edge[bend left,draw=black!40!blue] node{$\tuple{1,-1}$} (L1)
      (L1) edge[bend left,draw=black!40!blue] node{$\tuple{0,0}$} (L)
      (R) edge[bend left,draw=black!40!violet] node{$\tuple{-1,1}$}(R1)
      (R1) edge[bend left,draw=black!40!violet] node{$\tuple{0,0}$}(R)
  ;
  \end{tikzpicture}\hspace*{1cm}
  \begin{tikzpicture}
    \fill[blue!12] (-.6,.6) -- (.6,-.6) -- (-.6,-.6);
    \draw[step=.25cm,gray!40,very thin] (-.6,-.6) grid (.6,.6);
    \draw[very thin] (-.6,0) -- (.6,0);
    \draw[very thin] (0,-.6) -- (0,.6);
    \draw[very thick,black!40!blue] (0,0) -- (.6cm,-.6cm);
    \node[color=black!40!blue] at (-.3,-.35) {$\vec H_L$};
    \node at (0,-.7) {};
  \end{tikzpicture}\hspace*{1cm}
  \begin{tikzpicture}
    \fill[violet!12] (-.6,.6) -- (.6,-.6) -- (-.6,-.6);
    \draw[step=.25cm,gray!40,very thin] (-.6,-.6) grid (.6,.6);
    \draw[very thin] (-.6,0) -- (.6,0);
    \draw[very thin] (0,-.6) -- (0,.6);
    \draw[very thick,black!40!violet] (0,0) -- (-.6cm,.6cm);
    \node[color=black!40!violet] at (-.3,-.35) {$\vec H_R$};
    \node at (0,-.7) {};
  \end{tikzpicture}%
  \caption{A 2-dimensional game graph $(V,E,2)$ and two perfect half spaces.\label{fig-mean}}%
\end{figure*}

\subsection{\kl{Perfect Half Spaces}}

\AP
We ""represent@representation PHS"" \kl{partially perfect half spaces} by tuples $\vec{H} =
\tuple{\vec{h}_1, \ldots, \vec{h}_k}$ of $k\leq d$ mutually normal
nonzero $d$-dimensional integer vectors, which are 
normal to the represented half spaces.  For this, 
let $\intro*\prec$ denote the (strict) \emph{\intro{lexicographic ordering}}%
, and for any $d$-dimensional vector
$\vec{a}$, let $\vec{a} \cdot \vec{H}$ denote the pointwise
dot-product $\tuple{\vec{a} \cdot \vec{h}_1, \ldots, \vec{a} \cdot
  \vec{h}_k}$.  The \intro{partially perfect half space} denoted by
$\vec{H}$ is then $\{\vec{a} \in \+Q^d \,:\, \vec{a} \cdot \vec{H}
\prec \vec{0}\}$.

\AP
Let $|\vec H|\eqdef k$.
When $|\vec H| = d$, the "representation@representation PHS"
is a (full) \intro{perfect half space}; 
when $|\vec H| = 0$, it is the empty set since 
there is only one $0$-dimensional vector and the ordering $\prec$ is strict.

We define the ""norm@norm PHS"" $\|\vec{H}\|$ as the maximum of 
$\|\vec{h}_1\|$, \ldots,~$\|\vec{h}_k\|$.

\begin{example}
\label{ex:phs}
The two \kl{perfect half spaces} of interest on the right-hand side of
\cref{fig-mean} are $\{(x,y)\,:\, x+y<0\}\cup\{(x,y)\,:\,x+y=0\wedge
x>0\}$ denoted by $\vec H_L\eqdef\tuple{\tuple{1,1}, \tuple{-1, 1}}$,
and $\{(x,y)\,:\, x+y<0\}\cup\{(x,y)\,:\,x+y=0\wedge x<0\}$ denoted by
$\vec H_R\eqdef\tuple{\tuple{1,1}, \tuple{1, -1}}$.  They have the
half-plane $\{(x,y)\,:\, x+y<0\}$ with normal vector $(1,1)$ in
common, but differ on which half-line of its boundary
$\{(x,y)\,:\,x+y=0\}$ they contain.
\end{example}

\AP
We shall reason sometimes directly on the "representations@representation PHS"
of \kl{partially
perfect half spaces} through the \intro{prefix ordering}.  We write
$\vec H\mathbin{\intro*\leqpref}\vec H'$ when $\vec H$ is a prefix of $\vec
H'$, and 
$\intro*\phs_i\vec H_i$
for the longest common
prefix of a finite or infinite set of \kl{partially perfect half spaces}
$\vec H_1,\vec H_2,\dots$.  Observe that, if $\vec a\cdot\vec
H'\prec\vec 0$ and $\vec H\leqpref\vec H'$,
then $\vec a\cdot\vec H\preceq\vec 0$.

\subsection{\kl{Perfect Half Space Games}}

We write $\tuple{\widehat{V}, \widehat{E}, d}$ for 
the "weighted game graph" obtained from $\tuple{V, E, d}$ by pairing 
vertices in $V$ with \kl{perfect half spaces} of appropriately bounded norms, 
which may be changed only by Player~$2$:
\begin{itemize}
\item for both $i \in \{1, 2\}$, $\widehat{V}_i\eqdef V_i \times \?H$
  where $\?H$ is the set of all \kl{perfect half spaces} of norm at
  most $|V| \cdot \|E\|$;
\item
$\widehat{E}$ is the set of all
$\tuple{v, \vec{H}} \xrightarrow{\vec{w}} \tuple{v', \vec{H}'}$
such that $v \xrightarrow {\vec{w}} v'$ is in $E$
and if $v \in V_1$ then $\vec{H} = \vec{H}'$.
\end{itemize}

\AP
Let $\intro*\PHSgame\tuple{\widehat{V}, \widehat{E}, d}$ denote
the \intro{perfect half space game} in which the goal of Player~$2$ is
for the total weight to diverge in a direction consistent with the
chosen perfect half spaces:
\begin{definition}[Winning Condition for Perfect Half-Space Games]\label{eq-phs}
  An infinite play
  $\tuple{v_0, \vec{H}_0} \xrightarrow{\vec{w}_1} \tuple{v_1,
    \vec{H}_1} \xrightarrow{\vec{w}_2} \tuple{v_2, \vec{H}_2} \cdots$
  is \emph{winning for Player~$2$} if
  there exists a \kl{partially perfect half space} $\tuple{\vec{g}_1, \ldots, \vec{g}_k}$ with $k > 0$ 
that is a prefix of $\vec{H}_i$ for all sufficiently large~$i$, s.t.\ $\limsup_{n} \sum_{j=1}^n\vec{w}_j \cdot \vec{g}_k = -\infty$
 and, for all $1 \leq \ell<k$,
$\liminf_{n} \sum_{j=1}^n \vec{w}_j \cdot \vec{g}_\ell < +\infty$.
\end{definition}  
\AP
Observe that whether Player~$2$
wins from $\tuple{v, \vec{H}}$ does not depend on~$\vec{H}$, hence we
say that Player~$2$ ""wins from $v$@win PHS"" if there exists $\vec H\in\?H$
such that he "wins@win PHS" from $(v,\vec H)$---equivalently, he "wins@win PHS" from
$(v,\vec H)$ for all $\vec H\in\?H$---, and similarly for Player~$1$.

\AP
Given a finite path
\begin{equation*}
 P\eqdef(v_0,\vec H_0)\xrightarrow{\vec w_1}(v_1,\vec
H_1)\cdots(v_{n-1},\vec H_{n-1})\xrightarrow{\vec w_n}(v_n,\vec H_n)
\end{equation*}
in a \kl{perfect half space game}, we denote by
$\intro*\phspath(P)\eqdef\phs_{0\leq i\leq n}\vec H_i$ the
longest \kl{partially perfect half space} that agrees with all
the \kl{perfect half spaces} seen along the path. We also inherit the
notation $\weight(P)\eqdef\sum_{i=1}^n\vec w_i$ that accounts for the
sum of the \kl{weights} in $P$.  \AP
We say that ""$P$ is winning for
Player~$1$@winning play PHS"" if $\weight(P)\cdot\phspath(P)\succeq\vec 0$.
Similarly, \reintro[winning play PHS]{$P$ is winning for Player~$2$} if
$\weight(P)\cdot\phspath(P)\prec\vec 0$.  Note that when $P$ is in fact a
cycle, then its infinite iteration is "winning for a player@winning play PHS" if and only
if the cycle is winning for them according to this definition.

\begin{example}
\label{ex:phsg}\AP
Player~$2$ wins the \kl{perfect half space game} on the graph
of \cref{ex:wgg} from any vertex %
by choosing the perfect half space $\vec H_L$ from \cref{ex:phs} when
going to $v_L$ and $\vec H_R$ when going to $v_R$.  Indeed, either
Player~$1$ eventually only uses the left (blue) cycle, in which case
$(\vec g_1,\vec g_2)\eqdef\vec H_L$ itself can be used as witness
in \cref{eq-phs}, or she eventually only uses the right (violet)
cycle, in which case $(\vec g_1,\vec g_2)\eqdef\vec H_R$ fits, or she
alternates infinitely often between $v_R$ and $v_L$ (using the cyan
cycle), in which case the partially perfect half space
$(\vec g_1)\eqdef((1,1))$ is a witness of his victory.
\end{example}

\section{Solving Perfect Half Space Games}\label{sec-lmp}
As an intermediate step towards the proof of our determinacy and
complexity results for \kl{perfect half space games} (\cref{th:tr.lmp}), we
employ another winning condition introduced in~\citep{colcombet17}:
that of \kl{lexicographic energy games}.  We start by presenting a
proof of their \kl{positional determinacy}, and an upper bound for their
decision problem using the state-of-the-art results of \citet{comin16}
for \kl{mean-payoff games}.  We then proceed to show how \kl{perfect half space
games} can be reduced to \kl{lexicographic energy ones} in
\cref{sub-tr-lmp}.

\subsection{Solving "Lexicographic Energy Games"}\label{sub-lmp}
{\makeatletter
\renewcommand{\@IEEEsectpunct}{\ }%
\makeatother
\subsubsection{\kl{Lexicographic Energy Games}}\citep{colcombet17}}%
\phantomintro{Lexicographic Energy Games}
are
played on \kl{multi-weighted game graphs} $(V, E, d)$, as described
in \cref{sec-phs}.  An infinite play $v_0\xrightarrow{\vec
  w_1}v_1\xrightarrow{\weight_2}\cdots$ is \emph{winning for
  Player~$2$} if there exists $1\le k\le d$ s.t.\ $\limsup_{n}
\sum_{j=1}^n\vec{w}_j(k) = -\infty$ and, for all $1 \leq \ell<k$,
$\liminf_{n} \sum_{j=1}^n \vec{w}_j(\ell) <
+\infty$.\footnote{\kl{Lexicographic energy games} bear a superficial
resemblance to two different definitions of lexicographic mean-payoff
games, due respectively to \citet{bloem09} and to \citet{bruyere14}.
However, the definition that would best match \kl{lexicographic energy
games} would be multi-dimensional `pointwise' lexicographic mean-payoff
games, which do \emph{not} enjoy positional determinacy, and all these
definitions are unfit for our purposes.}

Put differently, "lexicographic energy games" are akin to "perfect half
space games", except that the same full perfect half space $(-\vec
e_1,\dots,-\vec e_d)$ is associated to every vertex of the game graph,
where $\vec e_i$ for $1\leq i\leq d$ denotes the unit vector with $1$
in coordinate $i$ and $0$ everywhere else.

\begin{example}\label{ex-lmp}
  Let us consider the \kl{multi-weighted game graph} of \cref{ex:wgg}.
  Player~$1$ wins the lexicographic energy game from any initial
  vertex, by moving to $v_L$ and looping on the left (blue) loop.
\end{example}

\subsubsection{Strategies}
 \AP A strategy for a player is \intro[positional strategy]{positional} if,
from each of her vertices, the player using it
always chooses the same outgoing edge, no
matter where the play started or how it evolved so far.  We say that a
game is \intro{positionally determined} if the two players have
"positional strategies"~$\sigma$ and~$\tau$, respectively, such that for
every vertex~$v \in V$, either $\sigma$ is winning for Player~$1$ from~$v$,
or $\tau$ is winning for Player~$2$ from~$v$.

\subsubsection{Reduction to Mean-Payoff Games}
A \intro{mean-payoff game} is played on a \intro{weighted game graph},
i.e. a \kl{$1$-dimensional weighted game graph} $(V,E,1)$, and is
denoted~$\intro*\MPgame(V,E)$.  From an infinite play
$v_0 \xrightarrow{u_1} v_1 \xrightarrow{u_2} v_2 \cdots$, Player~$1$
(`Max') gains a payoff $\liminf_{n \to \infty} (u_1 + \cdots + u_n) /
n$, whereas Player~$2$ (`Min') loses a payoff $\limsup_{n \to \infty}
(u_1 + \cdots + u_n) / n$.  A strategy for Max is \emph{optimal} for
her if by following it she is guaranteed to gain at least as much as
when using any other strategy, and optimal strategies for Min are
defined symmetrically.  By the positional determinacy of mean-payoff
games~\citep{zwick96}, there exist positional optimal strategies for
both players, yielding the same payoff for both from each initial
vertex, called the \emph{value} of the vertex.

A strategy for Max is \emph{winning} from some initial vertex if by
following it she is guaranteed to gain at least $\geq 0$, and a
strategy for Max is winning if by following it he is guaranteed to
lose at least $<0$.  Note that not every winning strategy for Min
needs to be optimal, but that if she wins then any optimal strategy is
winning: Min wins the game if and only if the value of the initial
vertex is $\geq 0$, and Max wins otherwise.

For a \kl{multi-weighted game graph} $(V, E, d)$, and for every~$i$,
$1 \leq i \leq d$, let the set $E(i)$ consist of the edges 
$v \xrightarrow{\weight(i)} v'$ where 
$v \xrightarrow{\weight} v' \in E$. 

\begin{theorem}
\label{th:lmp}
\begin{enumerate}[(i)]
\item \kl{Lexicographic energy games} are \kl{positionally determined}.
\item\label{th:lmp:complexity} There is an algorithm for solving \kl{lexicographic energy games}
  whose running time is in
  $O\left(|V|^{d+1}\cdot|E|\cdot\prod_{i=1}^d\|E(i)\|\right)$.
\end{enumerate}
\end{theorem}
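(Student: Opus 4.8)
The plan is to collapse the $d$ weight coordinates into a single integer by a mixed‑radix encoding, reduce the whole \kl{lexicographic energy game} to one \kl{mean-payoff game}, and then transfer both positional determinacy and the complexity estimate from the latter. Concretely, I would fix radices $B_i\eqdef 2|V|\cdot\|E(i)\|+1$ and replace each edge $v\xrightarrow{\weight}v'$ of $(V,E,d)$ by the single weight $u(v\to v')\eqdef\sum_{i=1}^d\weight(i)\cdot\prod_{j>i}B_j$, obtaining a $1$‑dimensional graph on which to play $\MPgame(V,E_u)$.

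The encoding is justified by a cycle lemma: for a simple cycle $C$ (length $\le|V|$) each coordinate of $\weight(C)$ lies in $[-|V|\cdot\|E(i)\|,\,|V|\cdot\|E(i)\|]$, so a telescoping estimate $\sum_{i>m}\tfrac{B_i-1}{2}\prod_{j>i}B_j=\tfrac12\bigl(\prod_{j>m}B_j-1\bigr)$ shows that the contribution of the coordinates above the first nonzero coordinate $m$ is strictly dominated by a single unit at position $m$. Hence $\operatorname{sign}(u(C))$ equals the sign of the first nonzero coordinate of $\weight(C)$, i.e. $u(C)\ge 0\iff\weight(C)\succeq\vec 0$ and $u(C)<0\iff\weight(C)\prec\vec 0$.

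The heart of the argument — and the step I expect to be the main obstacle — is to show that the positional optimal strategies $\sigma$ (Max) and $\tau$ (Min) supplied by the positional determinacy of $\MPgame(V,E_u)$ are already winning for the \emph{lexicographic} condition, even against a non‑positional opponent whose play need not be eventually periodic. Where the mean‑payoff value is $\ge 0$, fixing $\sigma$ leaves a one‑player graph in which every simple cycle reachable from $v$ has $u(C)\ge 0$, hence $\weight(C)\succeq\vec 0$ by the cycle lemma; decomposing an arbitrary play into simple cycles plus a bounded leftover, I would argue for each index $k$ that either some earlier coordinate $\ell<k$ has partial sums diverging to $+\infty$ (because infinitely many traversed cycles are strictly positive at $\ell$), or all but finitely many traversed cycles vanish on coordinates $1,\dots,k-1$ and are therefore $\ge 0$ at $k$, so coordinate $k$ stays bounded below — either way Player~$2$'s condition for index $k$ fails, so $\sigma$ is winning for Player~$1$. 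Symmetrically, where the value is $<0$, fixing $\tau$ makes every reachable simple cycle satisfy $\weight(C)\prec\vec 0$; letting $\kappa^*$ be the least coordinate at which infinitely many decomposed cycles are strictly negative, the coordinates below $\kappa^*$ stay bounded while coordinate $\kappa^*$ diverges to $-\infty$, which exhibits $\kappa^*$ as the witness making $\tau$ winning for Player~$2$. Since every vertex has value either $\ge 0$ or $<0$, this proves item~(i).

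For item~(ii) I would solve $\MPgame(V,E_u)$ with the algorithm of \citet{comin16}, of running time $O\!\left(|V|^2\cdot|E|\cdot\|E_u\|\right)$, and bound the new weights. With $B_j\le 3|V|\cdot\|E(j)\|$ one gets $\|E_u\|\le\sum_{i=1}^d\|E(i)\|\prod_{j>i}B_j=O\!\left(|V|^{d-1}\cdot\prod_{i=1}^d\|E(i)\|\right)$, the dominant contribution coming from $i=1$. Substituting this into the mean‑payoff running time yields the claimed $O\!\left(|V|^{d+1}\cdot|E|\cdot\prod_{i=1}^d\|E(i)\|\right)$, with the constant hidden in $O(\cdot)$ depending only on the fixed dimension $d$.
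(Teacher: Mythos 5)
Your proposal is correct and takes essentially the same route as the paper: a positional (mixed-radix) encoding of the $d$ coordinates into a single weight that preserves the lexicographic sign of simple cycles (the paper's \cref{prop:sign-pres}), a transfer of positional optimal mean-payoff strategies to the lexicographic game via cycle decomposition, the least coordinate occurring infinitely often as a witness, and a bounded leftover on the stack (the paper's \cref{lem:lmpg-positional}), and finally the bound of \citet{comin16} combined with $\|E^{(1)}\|=O\left(|V|^{d-1}\cdot\prod_{i=1}^d\|E(i)\|\right)$. The only differences are cosmetic: you define the radices by a closed product formula where the paper uses a recursion $E^{(d)},\dots,E^{(1)}$, and your $\kappa^*$ (least coordinate at which infinitely many decomposed cycles are strictly negative) coincides with the paper's leading dimension $k^*$.
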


We start by describing a translation from \kl{lexicographic energy
games} to \kl{mean-payoff games}, similar to the classical translation
from parity games~\citep{puri1995}: the idea is to write the
$d$-dimensional weights into a single weight by shifting the most
significant components by appropriate amounts.  We define accordingly
the sets of weighted edges $E^{(i)}$ for $i = d, d-1, \dots, 1$ as
follows:
\begin{itemize}
\item $E^{(d)} \eqdef E(d)$;
\item for all $i = d-1, d-2, \dots, 1$, and for all
$v \xrightarrow{\weight} v' \in E$, if $v \xrightarrow{r_{i+1}} v' \in
E^{(i+1)}$ then $v \xrightarrow{r_i} v' \in E^{(i)}$, where
$r_i \eqdef \weight(i) \cdot \left(|V| \cdot \|E^{(i+1)}\| + 1\right)
+ r_{i+1}$.
\end{itemize}
We will argue directly that \kl{positional optimal strategies} for the
two players in the \kl{mean-payoff game} $\MPgame(V, E^{(1)})$
witness \kl{positional determinacy} of the \kl{lexicographic energy
game} $\LexEngame(V, E, d)$.
\begin{figure}[tbp]
  \centering
  \begin{tikzpicture}[auto,on grid]
    \node[triangle](L){$v_L$};
    \node[triangle, right=3.6cm of L](R){$v_R$};
    \node[square,above right=.5 and 1.8 of L](A){};
    \node[square,below right=.5 and 1.8 of L](B){};
    \node[square, left=1.5 of L](L1){};
    \node[square, right=1.5 of R](R1){};
    \path[->,every node/.style={font=\footnotesize,inner sep=1pt}]
      (L) edge[bend left=15,draw=black!40!cyan,pos=0.8] node{$0$} (A)
      (A) edge[bend left=15,draw=black!40!cyan,pos=0.2] node{$-7$} (R)
      (R) edge[bend left=15,draw=black!40!cyan,pos=0.6] node{$0$} (B)
      (B) edge[bend left=15,draw=black!40!cyan,pos=0.35] node{$-1$} (L)
      (L) edge[bend left,draw=black!40!blue] node{$6$} (L1)
      (L1) edge[bend left,draw=black!40!blue] node{$0$} (L)
      (R) edge[bend left,draw=black!40!violet] node{$-6$}(R1)
      (R1) edge[bend left,draw=black!40!violet] node{$0$}(R)
  ;
  \end{tikzpicture}
  \caption{\label{fig:mpg}The weighted game graph $(V,E^{(1)})$
  constructed from the graph of \cref{fig-mean}.}
\end{figure}
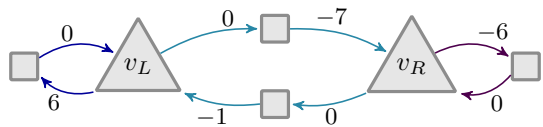
\begin{example}
  The weighted game graph obtained from the multi-weighted game graph
  of \cref{ex:wgg} is depicted in \cref{fig:mpg} (indeed
  $|V|\cdot\|E^{(2)}\|+1=7$).  Max has a positional optimal strategy
  consisting in moving to $v_L$ and using the left (blue) loop; every
  vertex has value~$6$.
\end{example}

The outcome of this encoding of $d$-dimensional weights in $E^{(1)}$
is the following, easy to establish, proposition.

\begin{proposition}
\label{prop:sign-pres}
  The total weight of a simple cycle in the \kl{multi-weighted game
  graph} $(V, E, d)$ is $\prec \vec 0$ 
  (or $= \vec 0$, or $\succ \vec 0$, respectively)
  if and only if the total \kl{weight} of the cycle in the \kl{weighted game
  graph} $(V, E^{(1)})$ is negative (or zero, or positive,
  respectively).
\end{proposition}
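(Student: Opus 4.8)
The plan is to prove a sharper statement by downward induction on the coordinate index~$i$: that the sign of the total $E^{(i)}$-weight of the cycle coincides with the \kl{lexicographic ordering} sign of the suffix $(\vec W(i),\dots,\vec W(d))$ of its total \kl{multi-weight}. Fix a simple cycle $C$ in $(V, E, d)$, write $\vec W\eqdef\weight(C)\in\+Z^d$ for its total \kl{multi-weight}, and for each~$i$ let $R_i$ denote the total weight of $C$ in the \kl{weighted game graph} $(V, E^{(i)})$, i.e.\ the sum over the edges of $C$ of their scalar $E^{(i)}$-weights. Writing $c_i\eqdef|V|\cdot\|E^{(i+1)}\|+1$ for the scaling factor, summing the recurrence $r_i=\weight(i)\cdot c_i+r_{i+1}$ over the edges of $C$ yields $R_d=\vec W(d)$ and $R_i=\vec W(i)\cdot c_i+R_{i+1}$ for $1\le i<d$. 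The claim to establish is that $R_i<0$ (resp.\ $=0$, resp.\ $>0$) if and only if $(\vec W(i),\dots,\vec W(d))\prec\vec 0$ (resp.\ $=\vec 0$, resp.\ $\succ\vec 0$); instantiating it at $i=1$ is exactly the proposition, since $R_1$ is the total weight of $C$ in $(V, E^{(1)})$.

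The base case $i=d$ is immediate as $R_d=\vec W(d)$. For the inductive step, the crucial observation is that the residual $R_{i+1}$ cannot overflow one \emph{digit}: because $C$ is simple it traverses at most $|V|$ edges, each of $E^{(i+1)}$-weight bounded in absolute value by $\|E^{(i+1)}\|$, so $|R_{i+1}|\le|V|\cdot\|E^{(i+1)}\|=c_i-1<c_i$. Consequently the leading term dominates whenever it is nonzero: if $\vec W(i)\neq 0$ then $|\vec W(i)\cdot c_i|\ge c_i>|R_{i+1}|$, so $R_i=\vec W(i)\cdot c_i+R_{i+1}$ has the same sign as $\vec W(i)$, which is precisely the \kl{lexicographic ordering} sign of $(\vec W(i),\dots,\vec W(d))$ since its leading entry is nonzero. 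If instead $\vec W(i)=0$, then $R_i=R_{i+1}$ and the lexicographic sign of the suffix is governed by coordinates $i+1,\dots,d$, so the conclusion follows directly from the induction hypothesis applied to $R_{i+1}$.

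I expect no genuine obstacle: the argument is the standard radix / positional-numeral estimate, and the only point requiring care is the domination bound $|R_{i+1}|<c_i$. That bound hinges entirely on the cycle being \emph{simple}, which caps its length at $|V|$ and thereby guarantees that accumulated lower-order contributions stay strictly below the threshold $c_i$ built into the construction of $E^{(i)}$; this is also exactly why the statement must be restricted to simple cycles, as a non-simple cycle could repeat enough edges to make a lower-order coordinate overwhelm the leading one.
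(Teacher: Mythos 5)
Your proof is correct, and it supplies exactly the argument the paper leaves implicit: the paper states this proposition without proof, calling it ``easy to establish,'' and the intended justification is precisely your standard radix-style downward induction, where the scaling factor $|V|\cdot\|E^{(i+1)}\|+1$ strictly dominates the accumulated lower-order weight $|R_{i+1}|\leq|V|\cdot\|E^{(i+1)}\|$ of a simple cycle. Your closing remark correctly identifies why simplicity of the cycle is essential to this domination bound; nothing is missing.
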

  
In order to show the "positional determinacy" of "lexicographic
energy games", we rely on the following lemma proven in
the appendix.

\begin{restatable}{lemma}{lmpgpositional}
\label{lem:lmpg-positional}
  If the value of the \kl{mean-payoff game} $\MPgame(V, E^{(1)})$ is non-negative
  (negative, resp.) at a vertex~$v$, then by using a \kl{positional optimal strategy} from that \kl{mean-payoff game}, Player~$1$ (Player~$2$, resp.) wins the corresponding \kl{lexicographic energy game} $\LexEngame(V, E, d)$ from~$v$.
\end{restatable}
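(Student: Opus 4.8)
The plan is to transfer, via \cref{prop:sign-pres}, the sign information carried by a positional optimal mean-payoff strategy to the lexicographic energy condition, using a cycle-decomposition analysis of infinite plays. I treat the non-negative case (Player~1) in detail; the negative case (Player~2) is dual. Let $\sigma$ be a positional optimal strategy for Player~1 (Max) in $\MPgame(V,E^{(1)})$ and suppose the value at $v$ is non-negative. I fix $\sigma$ and work in the subgraph $G_\sigma$ of $(V,E,d)$ in which Player~1's moves are those prescribed by $\sigma$, so that only Player~2 retains choices. Every play from $v$ consistent with $\sigma$ is an infinite path in $G_\sigma$, and I must show no such path is winning for Player~2 in $\LexEngame(V,E,d)$.

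First I would establish that every simple cycle reachable from $v$ in $G_\sigma$ has lexicographically non-negative total weight, that is $\weight(C)\succeq\vec 0$. Indeed, if some reachable simple cycle $C$ had $\weight(C)\prec\vec 0$, then by \cref{prop:sign-pres} its total weight in $(V,E^{(1)})$ would be negative; Player~2 (Min) could then drive the play from $v$ to $C$ along a path of $G_\sigma$ (at Player~1 vertices $\sigma$ forces the unique $G_\sigma$-edge, at Player~2 vertices Min selects it) and loop on $C$ forever, securing a strictly negative mean payoff and contradicting optimality of $\sigma$ at a vertex of value $\geq 0$; here I use the positional determinacy and the value characterisation of \kl{mean-payoff games}~\citep{zwick96}. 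The negative case is symmetric: from $\tau$ optimal for Min with value $<0$, every reachable simple cycle of $G_\tau$ satisfies $\weight(C)\prec\vec 0$, since a cycle with $\weight(C)\succeq\vec 0$ would let Max loop and obtain a non-negative mean payoff, again a contradiction.

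Next I would lift this cycle property to infinite plays by cycle decomposition. Reading a play as it unfolds and popping simple cycles off a stack of size at most $|V|$ yields an infinite multiset of simple cycles of $G_\sigma$, all $\succeq\vec 0$, together with, at each finite time $n$, a residual simple path $\rho_n$ of length $<|V|$; the partial sum $S_n\eqdef\sum_{j\le n}\vec w_j$ equals $\weight(\rho_n)$ plus the weights of the cycles popped by time $n$, and $\|\weight(\rho_n)\|\le|V|\cdot\|E\|$ is bounded. Classifying each popped cycle by its \emph{leading dimension} (the least coordinate in which it is nonzero, necessarily positive since $\succeq\vec 0$, or undefined if the cycle is $\vec 0$), two cases arise. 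If some coordinate is the leading dimension of infinitely many popped cycles, let $k$ be the least such; then $S_n(\ell)$ stays bounded for every $\ell<k$ (all but finitely many cycles contribute $0$ there), while $S_n(k)\to+\infty$ (cofinitely many cycles contribute $\geq 0$ and infinitely many contribute $\geq 1$). Hence $\liminf_n S_n(k)=+\infty$ and $\limsup_n S_n(\ell)>-\infty$ for all $\ell\le k$, which rules out every candidate witness $k'$ for Player~2. Otherwise every coordinate is the leading dimension of only finitely many cycles, so cofinitely many popped cycles are $\vec 0$, each $S_n(i)$ is bounded, and again no coordinate can satisfy $\limsup_n S_n(i)=-\infty$. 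Either way the play is losing for Player~2. For the negative case one runs the dual decomposition with all cycles $\prec\vec 0$: the least coordinate $k$ that is the leading dimension of infinitely many popped cycles satisfies $\limsup_n S_n(k)=-\infty$ (infinitely many cycles contribute $\le-1$ and the rest $\le 0$), while $S_n(\ell)$ remains bounded for $\ell<k$, exhibiting exactly the witness required by the winning condition.

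The main obstacle is the first step: extracting from mere positional \emph{optimality} in the mean-payoff game the fact that \emph{all} reachable simple cycles of $G_\sigma$ (resp.\ $G_\tau$) are sign-correct, rather than just the mean of a single play. This is precisely where the positional determinacy of \kl{mean-payoff games} and the ``loop on a reachable cycle'' argument are needed, and where \cref{prop:sign-pres} converts the scalar $E^{(1)}$-sign into the lexicographic comparison in $(V,E,d)$. The bookkeeping in the second step—separating the finitely many early cycles and the bounded residual path from the cofinitely many tail cycles, uniformly in $n$—is routine once the cycle signs are in hand.
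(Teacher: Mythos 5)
Your proposal is correct and follows essentially the same route as the paper's proof: sign-correctness of the cycles in the optimal strategy's subgraph (via \cref{prop:sign-pres}), a stack-based cycle decomposition of the play with a bounded residual path, and an analysis by leading dimension. The only differences are cosmetic—you detail the Player~1 case where the paper details Player~2, and you inline the ``reach a bad cycle and loop on it'' argument where the paper simply cites \citet{zwick96} for the fact that cycles in an optimal strategy subgraph have the correct sign.
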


\begin{proof}[Proof of \cref{th:lmp}]
By~\cref{lem:lmpg-positional}, in order to compute a positional
winning strategy for one the players in a \kl{lexicographic energy
game} $\LexEngame(V, E, d)$, it suffices to find a positional optimal
strategy in the corresponding \kl{mean-payoff game} $\MPgame(V,
E^{(1)})$.  This entails the \kl{positional determinacy}
of \kl{lexicographic energy games} (cf., e.g.,~\citep{zwick96}).
Regarding complexity, the state-of-the-art algorithm for
solving \kl{mean-payoff games} due to \citet{comin16} runs in time
$O\left(|V|^2 \cdot |E| \cdot \|E\|\right)$.  Observe that $|E^{(1)}|
= |E|$ and $\|E^{(1)}\| =
O\left(|V|^{d-1} \cdot \prod_{i=1}^d \|E(i)\|\right)$, and hence the
algorithm of \citeauthor{comin16} can be used to solve lexicographic
energy games in time $O\!\!\left(\!|V|^{d+1} \cdot
|E| \cdot \prod_{i=1}^d\!\|E(i)\|\!\right)\!$.
\end{proof}

\begin{figure*}[tbp]
  \centering
  \begin{tikzpicture}[auto,on grid]
    \shade[left color=violet!30,right color=white] (2.6,1.1) --
  (1,-1.1) -- (7,-1.1) -- (7,1.1) -- cycle;
    \shade[right color=blue!22,left color=white] (-4.4,1.1) --
  (-4.4,-1.1) -- (1,-1.1) -- (2.6,1.1) -- cycle;
    \draw[dashed,black!40,thick] (2.6,1.1) -- (1,-1.1);
    \node[color=black!40!blue] at (-2.4,0) {$\vec H_L$};
    \node[color=black!40!violet] at (6,0) {$\vec H_R$};
    \node[triangle](L){$v_L$};
    \node[triangle, right=3.6cm of L](R){$v_R$};
    \node[square,above right=.5 and 1.8 of L](A){};
    \node[square,below right=.5 and 1.8 of L](B){};
    \node[square, left=1.5 of L](L1){};
    \node[square, right=1.5 of R](R1){};
    \path[->,every node/.style={font=\tiny,inner sep=.5pt}]
      (L) edge[bend left=15,draw=black!40!cyan,pos=0.8] node{$\tuple{0,0,0,0}$} (A)
      (A) edge[bend left=15,draw=black!40!cyan,pos=0.2] node{$\tuple{0,-1,1,1}$~} (R)
      (R) edge[bend left=15,draw=black!40!cyan,pos=0.6] node{$\tuple{0,0,0,0}$} (B)
      (B) edge[bend left=15,draw=black!40!cyan,pos=0.35] node{$\tuple{0,-1,1,1}$} (L)
      (L) edge[bend left,draw=black!40!blue] node{$\tuple{0,0,0,-2}$} (L1)
      (L1) edge[bend left,draw=black!40!blue] node{$\tuple{0,0,0,0}$} (L)
      (R) edge[bend left,draw=black!40!violet] node{$\tuple{0,0,0,-2}$}(R1)
      (R1) edge[bend left,draw=black!40!violet] node{$\tuple{0,0,0,0}$}(R)
  ;    
  \end{tikzpicture}
  \caption{The translation of the graph from \cref{fig-mean}
  to lexicographic energy games.\label{fig-lmp}}%
\end{figure*}
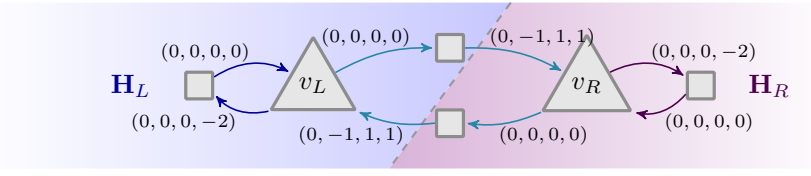

\subsection{Translation to Lexicographic
  Energy Games}\label{sub-tr-lmp}

We now reduce \kl{perfect half space games} to \kl{lexicographic energy games}.
Given a \kl{perfect half space game} played on a \kl{$d$-dimensional
multi-weighted game graph}, the idea is to play a \kl{lexicographic energy game}
on a \kl{$2d$-dimensional game graph}, where the extra dimensions are used to
penalise Player~$2$ for changing of \kl{perfect half space}.
\newcommand\HS{\vec{H}} \newcommand\PHS{\vec{H}}

\subsubsection{\kl{Flag Vectors} and \kl{Interleavings}}
\AP
For any \kl{$d$-dimensional perfect half spaces} $\vec{H}$ and
$\vec{H}'$, let the \intro{flag vector} $\vec{e}_{\vec{H}, \vec{H}'}$
be defined for all $i=1,\dots,d$ by
$\vec{e}_{\vec{H}, \vec{H}'}(i)\eqdef 0$ if the $i$-th coordinates of
$\vec{H}$ and $\vec{H}'$ are identical, and
$\vec{e}_{\vec{H}, \vec{H}'}(i)\eqdef 1$ otherwise.
For any $d$-dimensional vectors $\vec{a}$ and $\vec{b}$,
let $\vec{a} \shuffle \vec{b}$ be their \intro{interleaving}
$\tuple{\vec{a}(1), \vec{b}(1), \ldots, \vec{a}(d), \vec{b}(d)}$.

\subsubsection{Translation}
\AP
We write $\tuple{\widehat{V}, \widetilde{E}, 2d}$ 
for the \kl{weighted game graph} obtained from 
$\tuple{\widehat{V}, \widehat{E}, d}$ by doubling the dimension, 
where the even indices of \kl{weights} in $\widetilde{E}$ 
contain the corresponding \kl{weights} from $\widehat{E}$
but normalised with respect to the current \kl{perfect half space},
and the odd indices are occupied by \kl{flag vectors} that 
penalise Player~$2$ for changing the \kl{perfect half spaces}.
More precisely, $\widetilde{E}$ is the set of all
$\tuple{v, \vec{H}}
 \xrightarrow{\vec{e}_{\vec{H}, \vec{H}'} \shuffle (\vec{w} \cdot \vec{H})}
 \tuple{v', \vec{H}'}$
such that $\tuple{v, \vec{H}} \xrightarrow{\vec{w}} \tuple{v', \vec{H}'}$
is in~$\widehat{E}$.

\AP
Let $\intro*\LexEngame\tuple{\widehat{V}, \widetilde{E}, 2d}$ denote the
lexicographic energy game played on the multi-weighted game graph
$(\widehat{V}, \widetilde{E}, 2d)$.

\begin{example}
  We depict in \cref{fig-lmp} a fragment of the translated game graph
  $(\widehat{V},\widetilde{E},2d)$ for the \kl{perfect half space game}
  from \cref{ex:phsg}.  The vertices on the left of
  the median dashed line are all paired with $\vec H_L$, while those
  on the right are paired with $\vec H_R$.  The \kl{flag vector} $\vec
  e_{\vec H_L,\vec H_R}=(0,1)=\vec e_{\vec H_R,\vec H_L}$ is
  "interleaved" with the normalised vectors on the two middle edges
  entering $v_R$ and~$v_L$.  

  In contrast to \cref{ex-lmp}, Player~$1$ now loses the lexicographic
  energy game in \cref{fig-lmp}.  Indeed, if she plays the middle
  simple cycle (in cyan) infinitely often, then the energy on the
  first coordinate converges to~$0$ and the energy in the second
  coordinate diverges to~$-\infty$.  Otherwise (i.e., if the number of
  occurrences of the middle cycle is bounded), the energy in the first
  three coordinates does not diverge and the energy in the fourth
  coordinate diverges to~$-\infty$.
\end{example}

The correctness of this translation is a direct consequence of the
definitions, as shown in the following lemma proven in the appendix.
\begin{restatable}{lemma}{lemtrlmp}
\label{th:tr.lmp.1}
  The winning strategies of Player~$i$, $i\in\{1,2\}$,
  are the same in  
  $\PHSgame\tuple{\widehat{V}, \widehat{E}, d}$ and
  $\LexEngame\tuple{\widehat{V}, \widetilde{E}, 2d}$.
\end{restatable}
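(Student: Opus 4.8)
The plan is to first note that the two games are played on the same arena and then reduce the statement to a play-by-play comparison of the two winning conditions. Concretely, $\PHSgame\tuple{\widehat V,\widehat E,d}$ and $\LexEngame\tuple{\widehat V,\widetilde E,2d}$ share the position set $\widehat V$, and the translation maps each edge $\tuple{v,\vec H}\xrightarrow{\vec w}\tuple{v',\vec H'}$ of $\widehat E$ to the edge of $\widetilde E$ with the same endpoints carrying the relabelled weight $\vec e_{\vec H,\vec H'}\shuffle(\vec w\cdot\vec H)$. Because the weight of an edge of $\widehat E$ is already determined by its endpoints, there is at most one edge between any ordered pair of positions in either graph, so this map is a bijection preserving source and target. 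Consequently a strategy of either player---viewed as a choice of next position from each history---is literally the same object in both games, and the plays are in bijection. It therefore suffices to prove that a single infinite play $\tuple{v_0,\vec H_0}\xrightarrow{\vec w_1}\tuple{v_1,\vec H_1}\cdots$ is winning for Player~$2$ in the perfect half space game if and only if the corresponding lexicographic play is winning for Player~$2$; since each game declares every play a win for exactly one player, the equivalence then also holds for Player~$1$, which is what is claimed.

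For the play-level equivalence I would establish the dictionary $k\leftrightarrow 2k$ relating a Player~$2$ witness $\tuple{\vec g_1,\dots,\vec g_k}$ of the perfect half space condition to the leading dimension $2k$ of the lexicographic condition, keeping in mind that the interleaving places the flag $\vec e_{\vec H_{j-1},\vec H_j}$ on the odd coordinate $2m-1$ and the normalised weight $(\vec w_j\cdot\vec H_{j-1})(m)$ on the even coordinate $2m$. Three elementary observations do the work. First, the partial sums of any odd (flag) coordinate are non-negative and non-decreasing, so their $\limsup$ is never $-\infty$; hence no flag coordinate can serve as the leading dimension of a lexicographic Player~$2$ win, forcing that dimension to be even. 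Second, writing $\vec h_\ell^{(j-1)}$ for the $\ell$-th normal vector of $\vec H_{j-1}$, the condition $\liminf_n\sum_{j\leq n}\vec e_{\vec H_{j-1},\vec H_j}(\ell)<+\infty$ holds exactly when the $\ell$-th component of the perfect half space changes only finitely often, i.e.\ stabilises to some $\vec g_\ell$. Third, once components $1,\dots,m$ have stabilised, for each $\ell\leq m$ we have $\sum_{j\leq n}\vec w_j\cdot\vec h_\ell^{(j-1)}=C_\ell+\sum_{j\leq n}\vec w_j\cdot\vec g_\ell$ for a constant $C_\ell$ independent of $n$, since only the finitely many steps preceding stabilisation differ and they contribute a bounded amount (the weights being bounded); thus $\liminf$ and $\limsup$ of the even coordinate $2\ell$ agree with those of $\sum_{j\leq n}\vec w_j\cdot\vec g_\ell$.

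Combining these, in the forward direction a witness $\tuple{\vec g_1,\dots,\vec g_k}$ makes flags $1,\dots,k$ eventually vanish by the second observation, so the odd coordinates below $2k$ satisfy $\liminf<+\infty$; the third observation then turns the perfect half space inequalities for $\vec g_\ell$ ($\ell<k$) and $\vec g_k$ into $\liminf<+\infty$ on the even coordinates $2\ell$ and $\limsup=-\infty$ on $2k$, exhibiting $2k$ as a lexicographic leading dimension. Conversely, a lexicographic Player~$2$ win has an even leading dimension $2m$ by the first observation; the requirement $\liminf<+\infty$ on the preceding flag coordinates forces components $1,\dots,m$ to stabilise to some $\vec g_1,\dots,\vec g_m$ (a prefix of $\vec H_i$ for all large $i$) by the second observation, and the third observation converts the lexicographic inequalities on the even coordinates $2\ell$ ($\ell<m$) and $2m$ back into the perfect half space conditions on $\vec g_\ell$ and $\vec g_m$, yielding a Player~$2$ witness of length $m$. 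I expect the only real obstacle to be the bookkeeping in the third observation: one must check that normalising each weight by the half space attached to the source of its edge, rather than by the eventual $\vec g_\ell$, perturbs each partial sum by only a fixed finite constant, and that the index doubling ($2\ell-1$ for the $\ell$-th flag, $2\ell$ for the $\ell$-th normalised weight) is matched correctly against the half space index $\ell$; the rest is a direct transcription of the two winning conditions.
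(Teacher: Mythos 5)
Your proposal is correct and follows essentially the same route as the paper's own proof: reduce to a play-by-play comparison, rule out odd (flag) coordinates as the leading dimension by non-negativity, translate finiteness of the flag sums into eventual stabilisation of a common prefix, and match the even coordinates against the dot products with the stabilised normals under the dictionary $k \leftrightarrow 2k$. Your third observation merely makes explicit a bounded-perturbation bookkeeping step that the paper treats implicitly.
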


\AP
Define a strategy $\tau$ for Player~$2$ in the \kl{perfect half space
game} $\PHSgame\tuple{\widehat{V}, \widehat{E}, d}$ to be
\intro{(perfect half space) oblivious at~$v$} for $v\in V_2$ if it chooses the same
move in $(v,\HS)$ for all~$\HS$.  It is \intro{perfect half space
oblivious} if it is "oblivious at all vertices"~$v\in V_2$.  We are now
ready to prove the main theorem of this section.

\begin{theorem}
\label{th:tr.lmp}
\begin{enumerate}[(i)]
\item
\label{th:tr.lmp.2}
  There is an algorithm for solving \kl{perfect half space games}
  whose running time is in
  $O\left((3 |V| \cdot \|E\|)^{2(d + 1)^3}\right)$.
\item
\label{th:tr.lmp.3}
  If Player~$2$ has a winning strategy in the \kl{perfect half
  space game} $\PHSgame\tuple{\widehat{V}, \widehat{E}, d}$, then he has 
  one that is \kl{perfect half space oblivious}.
\end{enumerate}
\end{theorem}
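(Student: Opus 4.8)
The plan is to derive both statements from the reduction to \kl{lexicographic energy games} given by \cref{th:tr.lmp.1}, combined with the \kl{positional determinacy} and the complexity bound for those games established in \cref{th:lmp}. In both cases the first move is to pass from $\PHSgame\tuple{\widehat{V}, \widehat{E}, d}$ to the \kl{lexicographic energy game} $\LexEngame\tuple{\widehat{V}, \widetilde{E}, 2d}$ of \cref{sub-tr-lmp}, which by \cref{th:tr.lmp.1} preserves the winner and the winning strategies of each player; the two items then differ in what we extract from the solution of the latter game.

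For the complexity upper bound, I would run the algorithm of \cref{th:lmp:complexity} on the $2d$-dimensional game $\LexEngame\tuple{\widehat{V}, \widetilde{E}, 2d}$ and bound its parameters. Writing $N\eqdef|V|\cdot\|E\|$, the set $\?H$ of \kl{perfect half spaces} of norm at most $N$ has size at most $(2N+1)^{d^2}\leq(3N)^{d^2}$, since each is a $d$-tuple of integer vectors with entries in $\{-N,\dots,N\}$; hence $|\widehat{V}|\leq(3N)^{d^2+1}$ and $|\widehat{E}|=|\widetilde{E}|\leq(3N)^{2d^2+2}$. The odd coordinates of the \kl{weights} in $\widetilde{E}$ are \kl{flag vectors}, of norm at most $1$, while each even coordinate $\vec w\cdot\vec h_m$ has absolute value at most $d\cdot\|E\|\cdot N$, so $\prod_{i=1}^{2d}\|\widetilde{E}(i)\|\leq(dN^2)^d\leq(3N)^{d^2+2d}$. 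Substituting into the running time $O\!\left(|\widehat{V}|^{2d+1}\cdot|\widetilde{E}|\cdot\prod_{i=1}^{2d}\|\widetilde{E}(i)\|\right)$ of \cref{th:lmp:complexity}, the dominant factor $|\widehat{V}|^{2d+1}$ contributes exponent $(d^2+1)(2d+1)$, and adding all exponents gives $(3N)^{2d^3+4d^2+4d+3}$, which for $d\geq1$ lies within the claimed $O\!\left((3|V|\cdot\|E\|)^{2(d+1)^3}\right)$.

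For the obliviousness statement, \cref{th:tr.lmp.1} together with the \kl{positional determinacy} of \kl{lexicographic energy games} yields a \kl{positional strategy}~$\tau$ for Player~$2$ that wins $\PHSgame\tuple{\widehat{V}, \widehat{E}, d}$ from every vertex of his winning region. The difficulty is that $\tau$ may choose differently at $(v,\vec H)$ for different $\vec H$, whereas an \kl{oblivious} strategy must fix, for each $v\in V_2$, a single target $v'$ and announced half space $\vec H'$. The structural fact I would exploit is that, because the odd (flag) coordinates of $\widetilde{E}$ are non-negative, the coordinate witnessing a win in the \kl{lexicographic energy game} is necessarily even, say $2m$; consequently winning forces the first $m$ coordinates of the announced half spaces to stabilise along every play, to some prefix $\tuple{\vec g_1,\dots,\vec g_m}$. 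The crucial consequence is that once all half spaces announced infinitely often share this prefix, the flags of coordinates $1,\dots,m$ eventually vanish and the relevant inner products $\vec w\cdot\vec g_1,\dots,\vec w\cdot\vec g_m$ depend only on the traversed edge and not on the suffix of the incoming half space, so that the winning condition on coordinates $1,\dots,2m$ becomes a property of the sequence of vertices alone.

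I would then read off from $\tau$, for each $v\in V_2$, a fixed announcement $\vec H'_v$ extending the committed prefix, and verify winning by the usual cycle analysis: every cycle traversed infinitely often should be winning for Player~$2$ with respect to the least common prefix $\phspath$ of the announced half spaces. The main obstacle, and the part that needs genuine work, is the consistency of these per-vertex announcements: against an arbitrary Player~$1$ strategy the play may visit infinitely often several of Player~$2$'s vertices, and the least common prefix of their fixed announcements must still be a winning direction. Eliminating the residual dependence of $\tau$ on $\vec H$ therefore amounts to finding a single vertex-indexed assignment $v\mapsto\vec H'_v$ that is simultaneously winning on every recurrent set of vertices reachable under the strategy. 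I expect this to be organised as an induction on $d$ through the \kl{leading dimension}: commit to $\vec g_1$, restrict to the boundary hyperplane $\{\vec a:\vec a\cdot\vec g_1=0\}$, and recurse, propagating the committed prefix backwards through $\tau$'s choices; this fixpoint/commitment step is where the argument is most delicate.
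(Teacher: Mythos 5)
Your item~(i) is correct and is essentially the paper's own argument: pass to $\LexEngame\tuple{\widehat{V},\widetilde{E},2d}$ via \cref{th:tr.lmp.1}, bound $|\widehat{V}|\leq(3|V|\cdot\|E\|)^{d^2+1}$, $|\widetilde{E}|\leq|\widehat{V}|^2$ and the coordinate norms, and plug these into the running time of \cref{th:lmp}. Your exponent bookkeeping ($2d^3+4d^2+4d+3\leq 2(d+1)^3$) checks out and is in fact marginally tighter than the paper's, since you exploit that the odd (flag) coordinates have norm~$1$.

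Item~(ii) is where the proposal has a genuine gap, and you flag it yourself: everything from ``The main obstacle'' onwards describes what would have to be proved --- a single vertex-indexed assignment $v\mapsto\vec{H}'_v$ that wins against every Player~1 behaviour --- and then only gestures at a hoped-for induction on the dimension (``commit to $\vec{g}_1$, restrict to the boundary hyperplane, and recurse'') without carrying it out. That consistency step is not a technical refinement; it is the entire content of the theorem. The per-play stabilisation fact you extract from \cref{th:tr.lmp.1} does not supply it, because the stabilised prefix varies from play to play and Player~1 decides which recurrent class of vertices the play settles in. Moreover, it is unclear that your dimension induction can even be set up as stated: obliviousness is a per-vertex constraint, not a global one, so there is no single leading vector $\vec{g}_1$ to commit to --- an oblivious winning strategy may legitimately announce half spaces with different leading vectors at different vertices, and a global restriction to one boundary hyperplane forfeits exactly this freedom.

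For comparison, the paper closes this gap by a different induction, on the \emph{number of vertices} at which the strategy is oblivious, fixing one vertex at a time. Given a winning positional strategy $\tau$ and a vertex $v\in V_2$, it defines, for each perfect half space $\vec{H}$, the strategy $\tau_{\vec{H}}$ that behaves at every $(v,\vec{H}')$ as $\tau$ does at $(v,\vec{H})$, and calls $\vec{H}$ \emph{good} when $\tau_{\vec{H}}$ still wins from $(v,\vec{H})$. The crux is \cref{cl-gphs}, the existence of a good $\vec{H}$, proved via elementary paths: if $\vec{H}$ is bad, then a positional Player-1 counter-strategy against $\tau_{\vec{H}}$ yields a $(\tau,v)$-elementary path from $(v,\vec{H})$ that is losing for Player~2, the key remark being that re-basing such a path at its final half space produces a cycle consistent with $\tau_{\vec{H}}$ with the same weight and a longer-or-equal common prefix; if \emph{every} $\vec{H}$ were bad, concatenating the finitely many losing elementary paths would assemble a play consistent with $\tau$ and losing for Player~2, contradicting that $\tau$ wins. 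With a good $\vec{H}$ in hand, any play under $\tau_{\vec{H}}$ either avoids $v$ (hence is a $\tau$-play) or eventually follows $\tau_{\vec{H}}$ from $v$, so winning is preserved, and the global consistency problem you identified never has to be confronted head-on. Without an argument of this kind --- or a fully worked-out version of your commitment scheme --- your proof of item~(ii) is incomplete.
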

\begin{proof}[Proof of \cref{th:tr.lmp}(\ref{th:tr.lmp.2})]
The upper bound on the running time is a consequence of
\cref{th:tr.lmp.1} and \cref{th:lmp}.\ref{th:lmp:complexity}.  Observe that the vertex set is of size $|\widehat
V|=|V|\cdot|\?H|\leq |V|\cdot (2 |V| \cdot \|E\| + 1)^{d^2}\leq
(3|V|\cdot\|E\|)^{d^2+1}$.  
Regarding the norms, $\|\widetilde
E\|=\max \{\|\weight\cdot\vec H\|\,:\,v\xrightarrow{\weight}v'\in E,\vec
H\in\?H\}$, hence $\|\widetilde E\|\leq d\cdot
|V|\cdot\|E\|^2\leq (3|V|\cdot\|E\|)^{2+\log d}$.  Hence a time bound
in $O\left((3|V|\cdot\|E\|)^m\right)$ where
$m=(d^2+1)(2d+3)+2d(2+\log d)
\leq 2(d+1)^3$.
\end{proof}
\begin{proof}[Proof of \cref{th:tr.lmp}(\ref{th:tr.lmp.3})]
\newcommand\thePHSGame{\PHSgame\tuple{\widehat{V}, \widehat{E}, d}}
The idea of the following proof is to show that, for any vertex of the
weighted game graph winning for Player~$2$, there is a `good' perfect
half space $\vec H$ such that following a positional strategy
$\tau_\vec H$ winning from $(v,\vec H)$ will also win from any
$(v,\vec H')$.

More formally, we prove by induction on~$k\leq|V_2|$ that there exists
a \kl{winning positional strategy}~$\tau$ for Player~$2$ which
is \kl{perfect half space oblivious at} $k$ distinct vertices
in~$V_2$.

The induction hypothesis obviously holds for~$k=0$ by using a
positional strategy in~$\thePHSGame$, which exists by
\cref{th:lmp,th:tr.lmp.1}.  For the induction step, let us suppose
that $\tau$ is a \kl{winning positional strategy for Player~$2$}
\kl{oblivious} at~$k<|V_2|$ distinct vertices $v_1,\dots,v_k\in V_2$.
Let $v$ be another $v\in V_2$ distinct from $v_1,\dots,v_k$ vertices;
$\tau$ and $v$ are now fixed for the remainder of the proof.

For all \kl{perfect half spaces}~$\HS$, let us denote by~$\tau_\HS$
the strategy~$\tau$ modified in such a way that it behaves in
$(v,\HS')$ as in $(v,\HS)$ for all~$\HS'\neq\HS$.  The result is still
a valid strategy (by definition of the \kl{perfect half space game})
and is of course \kl{oblivious at $v$} as well as at $v_1,\dots,v_k$.
We want to show that there exists $\vec H$ such that $\tau_\vec H$
fulfils the induction hypothesis.  This is the case for any $\vec H$
if~$v$ is not in the winning region.  We shall therefore assume that
$v$ is in the winning region for Player~$2$; thus $\tau$ is winning
from every $(v,\vec H)$ but might use different moves depending on
$\vec H$.

\paragraph{Good Perfect Half-Spaces}
Let us call a \kl{perfect half space}~$\HS$ \intro[good]{good (for
$\tau$ and $v$)} if $\tau_\HS$ is winning for Player~$2$ starting in
$(v,\HS)$, and "bad@good" otherwise.  As "shown in the appendix@proof:cl-gphs", there
must exist a \kl{good} \kl{perfect half space}, as otherwise
Player~$2$ would not win from $v$.

\begin{restatable}{claim}{clgphs}\label{cl-gphs}
 There exists a \kl{good} \kl{perfect half space}.
\end{restatable}

\paragraph{A Winning Strategy $\tau_\vec H$}
Let~$\HS$ be a \kl{good perfect half space} that exists according to
\cref{cl-gphs}.  Let us show that~$\tau_\HS$ fulfils the condition of
the induction hypothesis.  As already mentioned, it is \kl{oblivious
  at} $\{v,v_1,\dots,v_k\}$.  We have to prove that it is winning.
For this, let us consider any play consistent with~$\tau_\HS$ starting
from some $(v',\HS')$ in the winning region for Player~$2$.  Two cases
can happen.  Either this play does not visit the vertex~$v$, and in
this case it was already a run consistent with~$\tau$, and hence it is
winning for Player~$2$.  Otherwise it visits~$v$, and after that point
it continues in a way consistent with $\tau_\HS$ starting
from~$(v,\HS)$, and hence is winning for Player~$2$ since~$\HS$ is
\kl{good}.  This establishes the induction hypothesis, and thus
completes the proof of \cref{th:tr.lmp}(\ref{th:tr.lmp.3}).
\end{proof}

\section{Bounding Games}\label{sec-bnd}

\newcommand\nokl[1]{#1}

In this section, we define \kl{bounding games} (as introduced
in \citep{jurdzinski15}) and show how these can be reduced
to \nokl{perfect half space games} (\cref{th:eq.bounding}
below). \Cref{cor-bnd} then summarises our knowledge
about "bounding games".

\AP For a \kl{weighted game graph} $\tuple{V, E, d}$, we denote by
$\intro*\Bndgame\tuple{V, E, d}$ the \intro{bounding
  game} in which Player~$1$ (`Guard') strives
to contain the total weight within some $d$-dimensional hypercube,
while Player~$2$ (`Fugitive') attempts to escape.  More precisely, an
infinite play $v_0 \xrightarrow{\vec{w}_1} v_1 \xrightarrow{\vec{w}_2}
v_2 \cdots$ is winning for Player~$1$ if and only if the set
$\{\|\sum_{i = 1}^n \vec{w}_i\| \,:\, n \in \+N\}$ of norms of total
weights of all finite prefixes of the play is bounded.

\begin{example}
  Consider again the \kl{multi-weighted game graph} of \cref{ex:wgg}.
  Observe that Player~$1$ cannot choose to play solely in the left
  (blue) cycle, as the accumulated weights would drift towards
  $(+\infty,-\infty)$; a similar argument holds with the right
  (violet) cycle.  Hence, she must somehow balance the effect of the
  two cycles by switching infinitely often between $v_L$ and $v_R$,
  but the effect of the middle (cyan) cycle then makes the simulated
  weights drift towards $(-\infty,-\infty)$.  In fact, by the upcoming
  \cref{th:eq.bounding} and as seen in \cref{ex:phsg}, Player~$2$ wins
  this game.
\end{example}

\begin{theorem}
\label{th:eq.bounding}
Let $\tuple{V,E,d}$ be a \kl{multi-weighted game graph},
$v$ be a vertex in $V$,
and $i\in\{1,2\}$.  Player~$i$ wins the \kl{bounding game}
$\Bndgame\tuple{V, E, d}$ from $v$ if and only if Player~$i$ wins the
\kl{perfect half space game} $\PHSgame\tuple{\widehat{V}, \widehat{E}, d}$
from $v$.
\end{theorem}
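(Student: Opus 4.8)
The plan is to exploit that both games are determined and split the equivalence into two one‑sided implications. The \kl{bounding game} has a Borel winning condition (boundedness of the partial sums' norms is $\Sigma^0_2$) and is therefore determined by Martin's theorem, whereas \kl{perfect half space games} are \kl{positionally determined} by \cref{th:lmp,th:tr.lmp.1}. Hence from each vertex exactly one player wins each game, and it suffices to prove: (a) if Player~$2$ wins $\PHSgame\tuple{\widehat V,\widehat E,d}$ from $v$ then he wins $\Bndgame\tuple{V,E,d}$ from $v$; and (b) if Player~$1$ wins the \kl{perfect half space game} from $v$ then she wins the \kl{bounding game} from $v$. Distinguishing which player wins the \kl{perfect half space game} at $v$ and applying the matching implication then shows that the same player wins the \kl{bounding game}, which is the claimed equivalence for both $i\in\{1,2\}$.

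For (a) I would simply forget the announced \kl{perfect half spaces}. Since every edge of $\widehat E$ projects onto an edge of $E$ and Player~$1$ never alters the half space, a winning strategy for Player~$2$ in the \kl{perfect half space game} induces a strategy in the \kl{bounding game} whose consistent plays are exactly the projections of $\PHSgame$-plays in which Player~$2$ fills in the half spaces according to his original strategy. Any such play is winning for Player~$2$ in $\PHSgame$, so by \cref{eq-phs} it admits a witness $\tuple{\vec g_1,\dots,\vec g_k}$ with $k>0$ and $\limsup_n\sum_{j=1}^n\vec w_j\cdot\vec g_k=-\infty$, i.e.\ $\sum_{j=1}^n\vec w_j\cdot\vec g_k\to-\infty$. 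As $\vec g_k\neq\vec 0$, this forces $\|\sum_{j=1}^n\vec w_j\|\to\infty$, so the play is unbounded and hence winning for Player~$2$ in the \kl{bounding game}.

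The hard direction is (b), and here Player~$1$ cannot merely project her strategy: her \kl{positional} winning strategy in $\PHSgame$ reads the \kl{current perfect half space}, information that is absent from the \kl{bounding game}, where Player~$2$ makes no announcements. To bridge this gap I would route the argument through the \kl{first-cycle game} featuring \kl{perfect half spaces} of \citet{jurdzinski15}, in which a play runs until the first repeated state and Player~$1$ wins exactly when the resulting cycle $C$ is winning for her, that is, $\weight(C)\cdot\phspath(C)\succeq\vec 0$. First I would show that a \kl{positional} winning strategy $\sigma$ for Player~$1$ in $\PHSgame$, which exists by \cref{th:lmp,th:tr.lmp.1}, also wins this \kl{first-cycle game}: were some $\sigma$-consistent first cycle $C$ losing for Player~$1$, i.e.\ $\weight(C)\cdot\phspath(C)\prec\vec 0$, Player~$2$ could iterate $C$ forever---legal since $\sigma$ is \kl{positional} and so repeats its choices on the recurring states---producing a $\sigma$-consistent infinite play that, by the remark that a cycle's infinite iteration is won by a player precisely when the cycle is, would be won by Player~$2$, contradicting that $\sigma$ wins $\PHSgame$.

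With Player~$1$ winning the \kl{first-cycle game} in hand, I would invoke the construction of \citet{jurdzinski15}, which manufactures from any winning strategy for Player~$1$ in the \kl{first-cycle game} featuring \kl{perfect half spaces} an honest winning strategy for her in $\Bndgame\tuple{V,E,d}$. This last step is the genuine obstacle: it is exactly where the half-space information has to be eliminated, and it relies on the \kl{cycle decomposition} of Player~$1$'s plays together with the analysis of \kl{downward paths} and \kl{elementary path}s to show that the local, directional guarantee furnished by the \kl{first-cycle game} can be stitched together into a global strategy keeping all $d$ coordinates of the total weight bounded simultaneously. Combining the two steps yields (b), and with (a) and determinacy this completes the proof.
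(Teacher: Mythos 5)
Your proposal is correct, but it takes a genuinely different route through the hard direction, and a noticeably shorter one. Your part (a) is literally the paper's \cref{cl-phs-bnd}, and both proofs lean on the same two external pillars: positional determinacy of perfect half space games (\cref{th:lmp,th:tr.lmp.1}) and \citep[Lemma~5.5]{jurdzinski15}. The difference is where the work happens. The paper proves the converse as \cref{cl-bnd-phs}, ``Player~$2$ wins $\Bndgame\tuple{V,E,d}$ $\Rightarrow$ Player~$2$ wins $\PHSgame\tuple{\widehat V,\widehat E,d}$'': it invokes \citep[Lemma~5.5]{jurdzinski15} only in \emph{contrapositive} (together with determinacy of first-cycle games) to conclude that Player~$2$ wins $\FCgame\tuple{\widehat V,\widehat E,d}$, and then does substantial extra work---the strategy $\sigma^*$ that repeatedly cuts out winning cycles, the $V_1$ cycle decomposition, stack heights, downward paths, and the recurring-prefix analysis of \cref{cl-bnd}---to turn Player~$2$'s first-cycle strategy into a $\PHSgame$ strategy. (Incidentally, that machinery, including ``downward paths,'' belongs to \emph{this} paper's proof, not to the proof of the cited lemma as your last paragraph suggests.) You instead prove ``Player~$1$ wins $\PHSgame$ $\Rightarrow$ Player~$1$ wins $\FCgame$'' by the cheap pumping argument---a first cycle lost by Player~$1$ returns to the same vertex of $\widehat V$, so positionality of $\sigma$ lets Player~$2$ iterate it forever, and the cycle-iteration remark plus prefix-independence of the condition in \cref{eq-phs} yields a $\sigma$-consistent play won by Player~$2$, a contradiction---then apply \citep[Lemma~5.5]{jurdzinski15} in its stated, forward direction, and close the loop using only determinacy of $\PHSgame$ (your appeal to Martin's theorem for $\Bndgame$ is actually superfluous). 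What your route gives up is the explicit transfer of Player~$2$'s strategies from $\Bndgame$/$\FCgame$ to $\PHSgame$: your ``Player~$2$ wins $\Bndgame$ $\Rightarrow$ Player~$2$ wins $\PHSgame$'' is obtained by a determinacy flip, not by construction, whereas the paper's construction serves its stated goal of exposing the structure of Player~$2$'s winning strategies. Since the downstream uses (\cref{cor-bnd}, \cref{pr:aic}) only need the statement of \cref{th:eq.bounding} together with \cref{th:tr.lmp}, your proof would suffice for them. One point to tighten: the first-cycle game actually used stops at the first repeated $V_1$ \emph{graph vertex} (the half space need not recur), and Player~$2$ wins only if \emph{moreover} $\vec H_0=\vec H_n$ and $\weight(C)\cdot\phspath(C)\prec\vec 0$; your formulation (``first repeated state,'' win decided by the weight condition alone) is a different game. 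Your pumping argument still works for the correct game---precisely because the clause $\vec H_0=\vec H_n$ in Player~$2$'s winning condition is what makes $C$ a genuine cycle of $\widehat V$ that positionality lets you iterate---but you must state and win the game that is the hypothesis of \citep[Lemma~5.5]{jurdzinski15}, not your variant.
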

By \cref{th:tr.lmp}, \kl{perfect half space games} are determined, hence we
can focus on Player~$2$.  One implication is straightforward:
a winning strategy for Player~$2$ in $\PHSgame\tuple{\widehat{V},
  \widehat{E}, d}$ also wins $\Bndgame\tuple{V, E, d}$ when ignoring
the \kl{perfect half spaces}.  Note that this translates an \kl{oblivious
strategy} in $\PHSgame\tuple{\widehat{V}, \widehat{E}, d}$ into a 
\kl[positional strategy]{positional one} in $\Bndgame\tuple{V, E, d}$.
\begin{lemma}\label{cl-phs-bnd}
  If Player~$2$ wins $\PHSgame\tuple{\widehat{V}, \widehat{E}, d}$ from
  $v$, then he wins $\Bndgame\tuple{V, E, d}$ from~$v$ with the same
  strategy (where perfect half spaces are projected away).
\end{lemma}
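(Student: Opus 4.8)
The plan is to observe that Player~2's winning condition in the perfect half space game is strictly more demanding than his winning condition in the bounding game, so that projecting a play he wins in the former yields a play he wins in the latter. First I would fix a winning strategy $\sigma$ for Player~2 in $\PHSgame\tuple{\widehat V,\widehat E,d}$ from $(v,\vec H_0)$ for some $\vec H_0$. Because Player~1 may never alter the perfect half space and Player~2's half spaces are prescribed by $\sigma$, the perfect half space reached after any finite history is completely determined by the sequence of visited vertices together with $\sigma$; consequently every finite play of $\Bndgame\tuple{V,E,d}$ from $v$ lifts to a unique finite play of $\PHSgame\tuple{\widehat V,\widehat E,d}$ consistent with $\sigma$ from $(v,\vec H_0)$. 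This makes the ``projected'' strategy well defined and sets up a bijection between the infinite plays from $v$ consistent with it and the infinite $\sigma$-consistent plays from $(v,\vec H_0)$; corresponding plays carry exactly the same weight sequence $\vec w_1,\vec w_2,\dots$, since an edge $\tuple{v,\vec H}\xrightarrow{\vec w}\tuple{v',\vec H'}$ of $\widehat E$ has the same weight as the underlying edge $v\xrightarrow{\vec w}v'$ of $E$.

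Next I would translate the winning condition across this bijection. Take any infinite play from $v$ consistent with the projected strategy; its lift is winning for Player~2 in the perfect half space game. By \cref{eq-phs} there is then a partially perfect half space $\tuple{\vec g_1,\dots,\vec g_k}$ with $k>0$ satisfying $\limsup_n\sum_{j=1}^n\vec w_j\cdot\vec g_k=-\infty$; as a $\limsup$ equal to $-\infty$ forces the entire sequence to tend to $-\infty$, writing $S_n\eqdef\sum_{j=1}^n\vec w_j$ we obtain $S_n\cdot\vec g_k\to-\infty$. Since $\vec g_k$ is a fixed nonzero integer vector, $|S_n\cdot\vec g_k|\leq\|S_n\|\cdot\sum_{i=1}^d|\vec g_k(i)|$ with a constant second factor, so $\|S_n\|\to\infty$. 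Hence the set $\{\|\sum_{i=1}^n\vec w_i\|:n\in\+N\}$ is unbounded and the bounding-game play is winning for Player~2. As this holds for every play consistent with the projected strategy, that strategy wins $\Bndgame\tuple{V,E,d}$ from $v$.

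I do not expect a real obstacle here: the argument reduces to the elementary remark that $S_n\cdot\vec g_k\to-\infty$ alone (we never need the $\liminf$ clauses nor the prefix requirement of \cref{eq-phs}) already forces $\|S_n\|$ to be unbounded. The only point deserving care is verifying that the projection yields a genuine strategy in the bounding game, which rests on the fact noted above that the perfect half space component of a $\sigma$-consistent play is a function of the visited vertices.
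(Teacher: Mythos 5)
Your proof is correct and takes essentially the same approach as the paper's proof sketch: project the winning strategy onto the arena of the bounding game and observe that, for the fixed nonzero vector $\vec g_k$ furnished by \cref{eq-phs}, the divergence $\limsup_n \sum_{j=1}^n \vec w_j \cdot \vec g_k = -\infty$ forces the norms of the partial weight sums to be unbounded. The only caveat is a phrasing slip---not \emph{every} finite play of $\Bndgame\tuple{V,E,d}$ lifts to a unique $\sigma$-consistent play of $\PHSgame\tuple{\widehat V,\widehat E,d}$, only those consistent with the projected strategy do---but that restricted statement is all your argument actually uses.
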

\begin{proof}[Proof sketch]
Let Player~$2$ follow a winning strategy for the \kl{perfect half space game},
projected onto the arena of the bounding game,
and consider any resulting play.
By the winning condition of the former game,
the total weights have unbounded distances from some hyperplane,
and so have unbounded norms.
\end{proof}

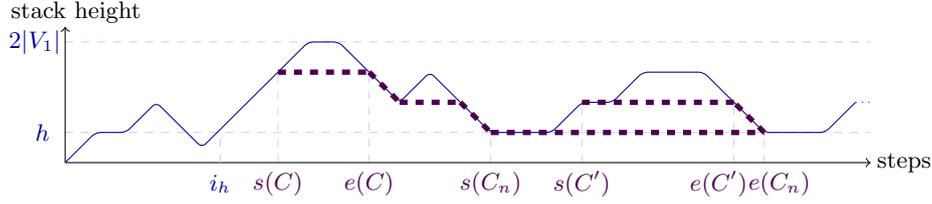
\begin{figure*}[tbp]
  \centering
  \begin{tikzpicture}[scale=.4,every node/.style={font=\footnotesize}]
    \draw[color=blue!20,dashed] (0,4) -- (26.5,4);
    \draw[color=blue!20,dashed](0,1) -- (26.5,1);
    \draw[color=violet!20,dashed] (7,0) -- (7,3) (17,0) -- (17,2)
    (23,0) -- (23,1) (14,0) -- (14,1) (22,0) -- (22,2) (10,0) -- (10,3);
    \draw[->] (0,0) -- (26.5,0);
    \draw[->] (0,0) -- (0,4.5);
    \node at (2.2ex,5) {stack height};
    \node at (27.6,0) {steps};
    \draw[color=black!40!blue,rounded corners=2pt]
      (0,0) -- (1,1) -- (2,1) -- (3,2) -- (4.5,0.5) -- (5,1)
            -- (8,4) -- (9,4) -- (11,2) -- (12,3) -- (14,1)
            -- (16,1) -- (17,2) -- (18,2) -- (19,3) -- (20,3)
            -- (21,3) -- (23,1) -- (25,1) -- (26,2);
    \draw[color=black!40!blue,dotted] (26,2) -- (26.5,2);
    \node[color=black!40!blue] at (-1,4) {$2|V_1|$};
    \node[color=black!40!blue] at (-.75,1) {$h$};
    \draw[color=blue!20,dashed] (5.1,0) -- (5.1,1);
    \node[color=black!40!blue] at (5.1,-.7) {$i_h$};
    \draw[color=black!40!violet,line width=2pt,rounded corners=2pt,dashed] 
               (10,3) -- (11,2)    (13,2) -- (14,1)    (16,1)    (23,1) -- (22,2)    (18,2);
    \draw[color=black!40!violet,line width=2pt,rounded corners=2pt,dashed]
      (7,3) -- (10,3)    (11,2) -- (13,2)    (14,1) -- (16,1) -- (23,1)    (22,2) -- (18,2) -- (17,2);
    \node[color=black!40!violet] at (7,-.7) {$s(C)$};
    \node[color=black!40!violet] at (10,-.7) {$e(C)$};
    \node[color=black!40!violet] at (17,-.7) {$s(C')$};
    \node[color=black!40!violet] at (21.5,-.7) {$e(C')$};
    \node[color=black!40!violet] at (14,-.7) {$s(C_n)$};
    \node[color=black!40!violet] at (23.5,-.7) {$e(C_n)$};
  \end{tikzpicture}
  \caption{Stack heights in the proof of \cref{cl-bnd}.\label{fig-rhon}}
\end{figure*}
It remains therefore to establish the converse implication in order to
complete the proof of \cref{th:eq.bounding}.
\begin{restatable}{lemma}{bndphs}\label{cl-bnd-phs}
  If Player~$2$ wins $\Bndgame\tuple{V, E, d}$ from~$v$, then he wins 
$\PHSgame\tuple{\widehat{V}, \widehat{E}, d}$ from~$v$.
\end{restatable}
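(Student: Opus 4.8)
The plan is to establish the contrapositive, using the \kl{positional determinacy} of \kl{perfect half space games} (\cref{th:lmp,th:tr.lmp.1}) together with the construction of \citet{jurdzinski15} that turns a winning strategy for Player~$1$ in a \kl{first-cycle game} featuring \kl{perfect half spaces} into a winning strategy for Player~$1$ (Guard) in the \kl{bounding game}. So I would suppose that Player~$2$ wins $\Bndgame\tuple{V, E, d}$ from~$v$ and, towards a contradiction, that Player~$2$ does \emph{not} win $\PHSgame\tuple{\widehat{V}, \widehat{E}, d}$ from~$v$. By determinacy and the vertex-independence observation of \cref{sec-phs}, Player~$1$ would then have a positional winning strategy~$\sigma$ in the \kl{perfect half space game} from every~$(v,\vec H)$. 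The first goal is to show that $\sigma$, read on the arena $\tuple{\widehat{V}, \widehat{E}, d}$, is also winning for Player~$1$ in the \kl{first-cycle game} in which a cycle~$C$ is declared winning for Player~$1$ exactly when $\weight(C)\cdot\phspath(C)\succeq\vec 0$. Feeding this into the construction of \citet{jurdzinski15} yields a winning strategy for Player~$1$ in $\Bndgame\tuple{V, E, d}$ from~$v$; playing it against Player~$2$'s assumed winning bounding strategy produces a single play that is at once bounded and unbounded, which is absurd. Hence Player~$2$ must win $\PHSgame\tuple{\widehat{V}, \widehat{E}, d}$ from~$v$.

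The key step I would carry out is the claim that no cycle~$C$ in $\tuple{\widehat{V}, \widehat{E}, d}$ consistent with~$\sigma$ satisfies $\weight(C)\cdot\phspath(C)\prec\vec 0$; since the loop formed at the first repeated vertex against~$\sigma$ is always such a $\sigma$-consistent cycle, this at once gives that $\sigma$ wins the \kl{first-cycle game}. I would argue by contradiction: suppose a bad cycle~$C$ through some $\widehat{V}$-vertex~$u$ existed. As $\sigma$ is positional and $C$ returns to~$u$, Player~$2$ can force the play to iterate~$C$ forever, each time replaying at~$u$ his moves and \kl{perfect half space} announcements from~$C$ while Player~$1$, bound by~$\sigma$, replays hers. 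Prefixing this by a $\sigma$-consistent approach from~$(v,\vec H)$ to~$u$ gives a genuine play consistent with~$\sigma$ whose tail is the infinite iteration~$C^\omega$. By the note closing \cref{sec-phs}, $C^\omega$ is winning for Player~$2$ precisely because $\weight(C)\cdot\phspath(C)\prec\vec 0$; and since the winning condition is invariant under finite prefixes (a finite prefix only shifts every partial sum by a constant and touches finitely many announcements, affecting neither the divergence to~$-\infty$ nor the $\liminf<+\infty$ conditions), the whole play is winning for Player~$2$, contradicting that $\sigma$ wins for Player~$1$.

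I expect the main obstacle to reside entirely in the final step, namely the invocation of \citet{jurdzinski15}: their construction must convert a \kl{first-cycle game} strategy on the half-space-augmented arena $\tuple{\widehat{V}, \widehat{E}, d}$---which may depend on the half spaces announced by Player~$2$---into a finite-memory \kl{bounding game} strategy on $\tuple{V, E, d}$, where no half spaces are present. The remaining ingredients---the cycle-iteration note of \cref{sec-phs}, the \kl{positional determinacy} of \kl{perfect half space games}, and the prefix-invariance of the divergence condition---are routine; the only delicate bookkeeping is at the interface with \citep{jurdzinski15}, checking that its notion of first-cycle game and its perfect-half-space cycle evaluation coincide with the per-cycle condition $\weight(C)\cdot\phspath(C)\succeq\vec 0$ used here.
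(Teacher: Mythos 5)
Your proposal is correct, but it follows a genuinely different route from the paper's. The paper works on Player~$2$'s side: from Player~$2$ winning $\Bndgame\tuple{V,E,d}$ it deduces, via determinacy of first-cycle games and the contrapositive of \citep[Lemma~5.5]{jurdzinski15}, that Player~$2$ wins $\FCgame\tuple{\widehat{V},\widehat{E},d}$, and then \emph{converts} that first-cycle strategy into a strategy $\sigma^*$ for $\PHSgame\tuple{\widehat{V},\widehat{E},d}$ by repeatedly cutting out winning cycles; the hard part is proving $\sigma^*$ winning, which requires the stack-based $V_1$ cycle decomposition and the combinatorial \cref{cl-bnd} (the longest eventually-common prefix of the cycles' half spaces is recurring). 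You instead work on Player~$1$'s side: assuming Player~$2$ wins the bounding game but not the perfect half space game, positional determinacy of the latter (available from \cref{sec-lmp} via \cref{th:lmp,th:tr.lmp.1}, so no circularity) gives Player~$1$ a positional strategy $\sigma$ winning from every $(v,\vec{H})$; your cycle-iteration argument---iterate a hypothetical bad $\sigma$-consistent cycle forever, using positionality, the cycle-iteration note closing \cref{sec-phs}, and prefix-invariance of the winning condition---shows $\sigma$ wins the first-cycle game; and \citep[Lemma~5.5]{jurdzinski15} applied in its stated (forward) direction then makes Player~$1$ win $\Bndgame\tuple{V,E,d}$, the desired contradiction. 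This buys a substantially shorter proof that bypasses \cref{cl-bnd} entirely, at the price of being non-constructive for Player~$2$ (the paper's proof exhibits his perfect half space strategy explicitly, which matches the structural insight advertised in the introduction) and of leaning on the Section~3 machinery. Two bookkeeping points, neither fatal: first, your per-cycle condition for Player~$1$ drops the clause that a first cycle with $\vec{H}_0\neq\vec{H}_n$ is won by Player~$1$ in the paper's $\FCgame$; your key claim only controls cycles returning to the same \emph{pair} $(u,\vec{H}_u)$, but that is exactly enough, since mismatched-end cycles are already wins for Player~$1$ under the paper's definition---so the interface alignment you flag works out in your favour. Second, your claim should be restricted to cycles reachable from $(v,\vec{H})$ consistently with $\sigma$ (your prefixing step presupposes such an approach path exists), but these are the only cycles arising in first-cycle plays from $(v,\vec{H})$, so nothing is lost.
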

\AP The proof of this lemma 
relies on \citep[Lemma~5.5]{jurdzinski15}---the most involved
result in that paper---, which shows how to construct a winning
strategy for Player~$1$ from $v$ in $\Bndgame\tuple{V, E, d}$ from a
\kl{winning strategy} in a \emph{\kl[first-cycle game]{first-cycle}} variant
$\intro*\FCgame\tuple{\widehat{V}, \widehat{E}, d}$ of
$\PHSgame\tuple{\widehat{V}, \widehat{E}, d}$ from $v$.  As these
first-cycle games are determined, this entails that, if Player~$2$
wins from $v$ in $\Bndgame\tuple{V, E, d}$, then he also wins from $v$
in the first-cycle game $\FCgame\tuple{\widehat{V}, \widehat{E}, d}$, and it
remains to show how to build a winning strategy for him in
$\PHSgame\tuple{\widehat{V}, \widehat{E}, d}$.  The reasoning itself is
surprisingly subtle, and similar to the one employed in the proof of
\citep[Lemma~5.3]{jurdzinski15}.
\begin{proof}[Proof of \cref{cl-bnd-phs}]\AP
  By \citep[Lemma~5.5]{jurdzinski15}, there exists a winning strategy
  $\sigma$ for Player~$2$ from some $\tuple{v, \vec{H}}$ in the
  following \intro{first-cycle game} $\FCgame\tuple{\widehat{V}, \widehat{E}, d}$:
  \begin{enumerate}
  \item
    the game finishes as soon as the play has a suffix
    $C=\tuple{v_0, \vec{H}_0} \xrightarrow{\vec{w}_1}
    \tuple{v_1, \vec{H}_1} \xrightarrow{\vec{w}_2} \cdots\xrightarrow{\vec{w}_n}
    \tuple{v_n, \vec{H}_n}$
    such that $v_0 = v_n \in V_1$;
  \item\label{fcg} Player~$2$ wins if and only if $\vec{H}_0 =
    \vec{H}_n$ and $C$ is winning for him: the total weight $\weight(C)\eqdef\vec{w}_1 + \cdots + \vec{w}_n$ of the cycle is in the \kl{partially perfect half space}
    defined by the longest common prefix,  i.e.\
  $\weight(C)\cdot\phs(C)\prec\vec 0$
  (recall that $\phspath(C)\eqdef\phs_{1\leq i\leq n}\vec H_i$).
\end{enumerate}

Let $\sigma^*$ denote the strategy for Player~$2$ from $\tuple{v,
  \vec{H}}$ in $\PHSgame\tuple{\widehat{V}, \widehat{E}, d}$ that amounts
to following $\sigma$ and repeatedly cutting out the winning cycles.
We want to show that $\sigma^*$ is winning: consider for this a play
$\tuple{v_0, \vec{H}_0} \xrightarrow{\vec{w}_1} \tuple{v_1, \vec{H}_1}
\xrightarrow{\vec{w}_2} \cdots$ consistent with $\sigma^*$ starting
from $v_0=v$.

\AP Let us consider the \intro{$V_1$ cycle decomposition} of this play: the
latter is the infinite sequence of `$V_1$-simple' cycles $C$
obtained by pushing the triples of visited vertices and \kl{perfect half
spaces} and indices $(v_0,\vec H_0,0),(v_1,\vec H_1,1),\dots$ onto a
stack, and as soon as we push a pair $(v_e,\vec H_e,e)$ with an
element $(v_s,\vec H_s,s)$ with $v_s=v_e\in V_1$ already present in
the stack, we pop the cycle $C$ thus formed from the stack and
push $(v_e,\vec H_e,e)$ back on top.  We call the indices
$s(C)\eqdef s$ and $e(C)\eqdef e$ the \emph{start} and
\emph{end} of the cycle, and denote by $\phs(C)$ and $\vec
w(C)$ the longest common prefix of its perfect half spaces and total
weight respectively.  Because $\sigma$ is winning in
$\FCgame\tuple{\widehat{V}, \widehat{E}, d}$, all the cycles $C$ formed in
the cycle decomposition satisfy condition~\ref{fcg} above, hence $\vec
H_{s(C)}=\vec H_{e(C)}$ and $\weight(C)\cdot\phs(C)\prec\vec 0$.

\AP Let us now consider the longest $\vec P$ such that there exists a
sufficiently large index $i_\vec P$ such that $\vec
P=\phs_{s(C)\geq i_\vec P}(\phspath(C))$.  We call a
"partially perfect half space representation" $\vec H$ \intro{recurring}
if $\vec H=\phs(C)$ for infinitely many cycles $C$ in the $V_1$
cycle decomposition of our play; such a vector $\vec H$ is necessarily
non-empty.
\begin{claim}\label{cl-bnd}
  $\vec P$ is \kl{recurring}.
\end{claim}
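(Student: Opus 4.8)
The plan is to argue by contradiction: assuming that $\vec P$ is \emph{not} \kl{recurring}, I will exhibit infinitely many cycles $C$ of the \kl{$V_1$ cycle decomposition} with $\phs(C)=\vec P$ \emph{exactly}, contradicting the assumption. First I would record the monotonicity behind the definition of $\vec P$: setting $\vec P_i\eqdef\phs_{s(C)\geq i}\phspath(C)$, the prefix $\vec P_i$ can only grow as $i$ increases, and since its length is bounded by $d$ it stabilises to $\vec P$ for all $i\geq i_{\vec P}$; in particular $\vec P\leqpref\phs(C)$ for every cycle with $s(C)\geq i_{\vec P}$. If $|\vec P|=d$ then no such $\phs(C)$ can strictly extend $\vec P$, so $\phs(C)=\vec P$ for all late cycles and we are done; hence I may assume $|\vec P|<d$ and set $q\eqdef|\vec P|+1$. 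Under the contradiction hypothesis there is an index past which every cycle strictly extends $\vec P$, so each $\phs(C)$ then has a well-defined $q$-th coordinate; and because $\vec P$ is the \emph{longest} common prefix, at least two distinct values must occur as this $q$-th coordinate, each for infinitely many cycles (were a single value $c$ eventually forced, $\vec P$ extended by $c$ would be a common prefix of all late cycles, contradicting stabilisation).

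Next I would exploit the stack heights depicted in \cref{fig-rhon}. Let $h^*$ be the $\liminf$ of the stack height along the run; after some time the stack never drops below $h^*$ yet returns to it infinitely often, so its bottom part is frozen, with a fixed $V_1$ vertex $v^*$ and a fixed \kl{perfect half space} $\vec H^*$ sitting at that level (the \kl{first-cycle} condition $\vec H_{s(C)}=\vec H_{e(C)}$ ensures that popping a cycle at this level leaves the half space unchanged, only the index moving). The cycles popped exactly at level $h^*$ therefore all start and end in $\tuple{v^*,\vec H^*}$, they tile the run after that time, and each such cycle $C$ satisfies $\vec P\leqpref\phs(C)\leqpref\vec H^*$, the last inequality holding because $\vec H^*$ is one of the half spaces of $C$. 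Consequently a top-level cycle can extend $\vec P$ only by the single $q$-th coordinate of $\vec H^*$: the \emph{second} recurring value of the $q$-th coordinate is never produced at level $h^*$, so the disagreement that pins $\vec P$ down to length $|\vec P|$ must be created strictly higher up in the stack.

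The crux, which I expect to be the main obstacle, is to convert this disagreement into genuine occurrences of $\phs(C)=\vec P$. I would locate the lowest $V_1$ level at which the $q$-th coordinate of the half space carried by the backbone changes value infinitely often: below it this coordinate is eventually constant, at it the coordinate alternates. For the coordinate at that level to change, the stack must dip to the frozen level immediately beneath it, where a cycle $C$ is then popped; the backbone of $C$ contains both the lower half space, whose $q$-th coordinate is the eventually-constant value, and a higher half space realising one of the two alternating values, and since those two coordinates differ the longest common prefix of $C$ collapses to exactly $\vec P$. As such dips recur infinitely often, this yields infinitely many cycles with $\phs(C)=\vec P$, the desired contradiction. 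This last argument—tracking when the stack returns to a critical level and reading off the collapse of the common prefix—is delicate and closely parallels the reasoning of \citep[Lemma~5.3]{jurdzinski15}; making precise the bookkeeping of levels, the alternation of $V_1$ and $V_2$ vertices along the backbone, and the frozen-versus-alternating dichotomy is where the real work lies.
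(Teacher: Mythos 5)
Your first two paragraphs are sound, and your central mechanism is the right one: a late cycle $C$ (with $s(C)\geq i_{\vec P}$) satisfies $\phs(C)=\vec P$ exactly when it contains two half spaces whose $q$-th vectors differ, and such a disagreement can only be created when the stack dips strictly below an entry carrying a deviant $q$-th vector, at which moment the popped cycle also contains all lower-level entries. (Your side remark that the level-$h^*$ cycles ``tile the run'' is inaccurate—previously popped higher-level cycles are not part of them—but nothing in your plan uses it.) The genuine gap is at the hinge of your third paragraph: you ``locate the lowest $V_1$ level at which the $q$-th coordinate \ldots changes value infinitely often'', but nothing you have established guarantees that any such level exists. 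An entry being destroyed and recreated infinitely often does not make its coordinate non-constant in time, and it is perfectly consistent with your paragraphs one and two that \emph{every} level's $q$-th vector is eventually constant: for instance, the second value $c_2\neq\vec h^*_q$ could only ever appear at one fixed level $\ell_2>h^*$, all of whose successive entries carry $c_2$, while every other level always carries its own fixed vector. In that scenario your $\ell^*$ is undefined and the plan stalls precisely where you predict ``the real work'' lies.

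The scenario can be refuted, but only by running your dip-and-swallow mechanism as a descent, which is the missing piece. Consider a late moment at which $c_2$ sits at some level $\ell>h^*$ (this recurs by your paragraph two). Since pops at level $h^*$ recur, the value at level $\ell$ is eventually destroyed by a first pop at a level $\ell'<\ell$ (pops at level $\ell$ itself preserve the half space, by the first-cycle condition $\vec H_{s(C)}=\vec H_{e(C)}$), and the popped cycle contains both the level-$\ell$ entry, still carrying $c_2$, and the entry at level $\ell'$. Either their $q$-th vectors differ, giving $\phs(C)=\vec P$; or level $\ell'$ also carries $c_2$, and then $h^*<\ell'<\ell$ because level $h^*$ carries $\vec h^*_q\neq c_2$ forever. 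Iterating, the level carrying $c_2$ strictly decreases, so after finitely many steps a disagreement is forced, at worst against the frozen level $h^*$ itself. As fresh occurrences of $c_2$ appear at arbitrarily late times, this yields infinitely many cycles with $\phs(C)=\vec P$, and once you have this descent you can discard $\ell^*$ altogether. Note that this completed argument is genuinely different from the paper's, which never reasons coordinate-wise: the paper shows, via ``downward paths'' and a pigeonhole on a finite meet-semilattice, that any two recurring values of $\phs(\cdot)$ admit a recurring common prefix, and then concludes from the maximality of $\vec P$; your descent, once supplied, is a legitimate alternative of comparable difficulty.
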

\begin{proof}[Proof of \cref{cl-bnd}]
  We reason on the height of the stack used to construct the \kl{$V_1$
  cycle decomposition} of the play.  Let us call $\rho_i$ the stack at
  step $i$.  Since its height $|\rho_i|$ is bounded by $2|V_1|$, there
  is a smallest height $h$ that occurs infinitely often, and a minimal
  index $i_h$ such that $h$ is the minimal height in the infinite
  suffix starting from $i_h$.  We depict the stack heights along the
  play in blue in \cref{fig-rhon}.
  
  \AP Let us call a \intro{downward path} a sequence of cycles
  $C_1,\dots,C_n$ such that, for all $1\leq i<n$, $C_i$ and $C_{i+1}$
  are either two successive cycles with
  $|\rho_{s(C_i)}|>|\rho_{s(C_{i+1})}|$ or two cycles (not necessarily
  successive) with $e_{C_i}=s_{C_{i+1}}$.  Observe that in both cases,
  they visit a common perfect half space $\vec H_{e_{C_i}}$, hence
  $\phs(C_i)$ and $\phs(C_{i+1})$ are comparable for the prefix
  ordering.
  
Assume there are two \kl{recurring} "representations@representation PHS" of partially
perfect half spaces $\vec H$ and $\vec H'$.  Let us show that they
have a common prefix that is also \kl{recurring}.  For this, consider
two occurrences $\phs(C)=\vec H$ and $\phs(C')=\vec H'$ of $\vec H$
and $\vec H'$ with $i_h<s(C)<s(C')$.  As shown by the thick dashed
violet line in \cref{fig-rhon}, and since a stack height of $h$ occurs
infinitely often, there must be two \kl{downward paths}
$C=C_1,\dots,C_n$ resp.\ $C'=C_{n+m},\dots,C_{n}$ from $C$ resp.\ $C'$
to a single cycle $C_n$.  Thus the sequence
$C=C_1,\dots,C_n,\dots,C_{n+m}=C'$ is such that, for all $1\leq
i<n+m$, $\phs(C_i)$ and $\phs(C_{i+1})$ are comparable for the prefix
ordering.  The set $\{\phspath(C_i)\,:\,1\leq i\leq n+m\}$ is a finite
meet-semilattice for the prefix ordering, thus with a bottom element
$\vec G\leqpref\vec H,\vec H'$.  As there are infinitely many such
pairs of occurrences of the \kl{recurring} $\vec H$ and $\vec H'$ and
finitely many different such $\vec G$ with $\|\vec G\|\leq
|V|\cdot\|E\|$, one of the latter must be \kl{recurring}.

To conclude the proof, assume now that $\vec P$ is not \kl{recurring} and
let us derive a contradiction.  Note that, for all cycles $C$ with
$s(C)\geq i_\vec P$, $\vec P\leqpref\phs(C)$.  Since $\vec
P$ is not \kl{recurring}, there must be two incomparable \kl{recurring} $\vec H$
and $\vec H'$, such that $\vec P\lpref\vec H$ and $\vec
P\lpref\vec H'$; we shall further assume that $\vec H$ is
minimal in length with this property.  By the previous argument, they
have a common prefix $\vec G\lpref\vec H,\vec H'$, which
is also \kl{recurring}, and which we shall also assume minimal in length.
Since $\vec H$ was chosen minimal, there is no \kl{recurring} $\vec G'$
incomparable with $\vec G$, and since $\vec G$ is minimal,
there is no \kl{recurring} $\vec G'\lpref\vec G$ either, hence
there exists an index $i$ such that $\vec
G=\phs_{s(C)\geq i}(\phspath(C))$.  As $\vec
P\leqpref\vec G$ and $\vec P$ was chosen of maximal
length with this property, $\vec P=\vec G$ is \kl{recurring}.
\end{proof}

Let us conclude the proof of \cref{cl-bnd-phs}.  Write $\vec P$ as
$(\vec p_1,\dots,\vec p_{|\vec P|})$.  For all cycles $C$ with
$s(C)\geq i_\vec P$, $\vec P\leqpref\phs(C)$ shows that $\vec
w(C)\cdot\vec P\preceq\vec 0$.  There are then $|\vec P|+1$ cases for
such cycles $C$: either there is $1\leq k\leq |\vec P|$ with $\vec
w(C)\cdot\vec p_k<0$ and $\weight(C)\cdot\vec p_\ell=0$ for all
$1\leq\ell\leq k$, or $\weight(C)\cdot\vec P=\vec 0$.  By
\cref{cl-bnd}, the $|\vec P|$ first cases occur (cumulatively)
infinitely often; let $k^*$ with $1\leq k^*\leq |\vec P|$ be the
smallest that does.  Then, as there are only finitely many occurrences
of cases $k<k^*$, and finitely many $\weight_j$ and $\vec H_j$ not
taken into account in the set of cycles $C$ with $s(C)\geq i_\vec P$,
$(\vec p_1,\dots,\vec p_{k^*})$ is a witness for \cref{eq-phs}:
Player~$2$ wins the play.
\end{proof}

By \cref{th:tr.lmp}, \cref{th:eq.bounding} and the proof of \cref{cl-phs-bnd},
we now have the following improvement over \citep[Corollary~3.2]{jurdzinski15}.

\begin{corollary}\label{cor-bnd}
\begin{enumerate}[(i)]
\item There is an algorithm for solving \kl{bounding games} whose running
  time is in $(|V| \cdot \|E\|)^{O(d^3)}$.
\item Player~$2$ has \kl{positional winning strategies} for \kl{bounding games}.
\end{enumerate}
\end{corollary}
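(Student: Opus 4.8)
The plan is to read off both items as immediate consequences of the reduction already assembled: \cref{th:eq.bounding} turns every question about \kl{bounding games} into one about \kl{perfect half space games}, and \cref{th:tr.lmp} together with \cref{cl-phs-bnd} supply, respectively, the running-time bound and the structural (positionality) information we need about the latter. Concretely, item~(i) will pair the vertex-by-vertex equivalence of \cref{th:eq.bounding} with the complexity bound of \cref{th:tr.lmp}(\ref{th:tr.lmp.2}), while item~(ii) will pair it with the obliviousness guarantee of \cref{th:tr.lmp}(\ref{th:tr.lmp.3}) and the projection recorded in (the proof of) \cref{cl-phs-bnd}.

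For item~(i), I would argue as follows. By \cref{th:eq.bounding}, for every vertex $v$ the winner of $\Bndgame\tuple{V,E,d}$ at $v$ coincides with the winner of $\PHSgame\tuple{\widehat V,\widehat E,d}$ at $v$; it therefore suffices to run the algorithm of \cref{th:tr.lmp}(\ref{th:tr.lmp.2}), whose running time is $O\bigl((3|V|\cdot\|E\|)^{2(d+1)^3}\bigr)$ and is already expressed in the parameters of the original graph. It then remains only to fold this expression into the advertised shape. Writing $N\eqdef|V|\cdot\|E\|$, we have $\|E\|\geq 1$ and, by strict alternation (which forbids self-loops and thus forces at least two vertices), $|V|\geq 2$; hence $N\geq 2$, so $3N\leq N^3$. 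Consequently $(3N)^{2(d+1)^3}\leq N^{6(d+1)^3}$, and since $6(d+1)^3=O(d^3)$ this is $(|V|\cdot\|E\|)^{O(d^3)}$, as claimed.

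For item~(ii), suppose Player~$2$ wins $\Bndgame\tuple{V,E,d}$ from $v$. By \cref{th:eq.bounding} he wins $\PHSgame\tuple{\widehat V,\widehat E,d}$ from $v$, so by \cref{th:tr.lmp}(\ref{th:tr.lmp.3}) he has a winning strategy that is \kl{perfect half space oblivious}; being produced by the positional-determinacy machinery behind \cref{th:tr.lmp} (via \cref{th:lmp} and \cref{th:tr.lmp.1}), this strategy may moreover be taken \kl{positional}. An oblivious positional strategy selects, at a position $\tuple{v',\vec H}$, an edge depending neither on the history nor on $\vec H$; projecting away the \kl{perfect half space} component exactly as in \cref{cl-phs-bnd} therefore produces a strategy on the arena of $\Bndgame\tuple{V,E,d}$ whose choice depends only on the current vertex, i.e.\ a \kl{positional strategy}. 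By \cref{cl-phs-bnd} this projection is winning in the \kl{bounding game}, and because the strategy of \cref{th:tr.lmp}(\ref{th:tr.lmp.3}) wins uniformly over Player~$2$'s entire winning region, its projection is a single positional strategy winning from every vertex from which Player~$2$ wins $\Bndgame\tuple{V,E,d}$.

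The substantive content of this corollary sits entirely in the results it invokes---above all the nontrivial direction \cref{cl-bnd-phs} of \cref{th:eq.bounding} and the obliviousness strengthening \cref{th:tr.lmp}(\ref{th:tr.lmp.3}). Granting those, the only points that call for genuine (if routine) care are the two bookkeeping steps above: checking that both the constant factor $3$ and the exponent $2(d+1)^3$ are absorbed into $(|V|\cdot\|E\|)^{O(d^3)}$, and verifying that it is precisely obliviousness---independence of the chosen edge from the current \kl{perfect half space}---that lets a positional strategy of the perfect half space game descend to a positional strategy once the half-space coordinate is projected out.
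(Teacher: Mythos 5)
Your proposal is correct and follows essentially the same route as the paper, whose entire proof of this corollary is the one-line citation of \cref{th:tr.lmp}, \cref{th:eq.bounding}, and the remark accompanying \cref{cl-phs-bnd} that an oblivious (positional) strategy in $\PHSgame\tuple{\widehat V,\widehat E,d}$ projects to a positional one in $\Bndgame\tuple{V,E,d}$. Your two added bookkeeping steps---absorbing the constant $3$ and exponent $2(d+1)^3$ into $(|V|\cdot\|E\|)^{O(d^3)}$ using $N\geq 2$, and checking that positionality plus obliviousness is exactly what survives the projection---are precisely the details the paper leaves implicit, and both are carried out correctly.
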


\section{Multi-Dimensional \mbox{Energy Parity Games}}\label{sec-mep}

In this section, we define  \kl{multi-dimensional energy parity games}
(as introduced in \citep{chatterjee14})
as well as \kl{extended multi-dimensional energy games} (from~\citep{brazdil10}), and show how to 
solve them with an arbitrary (\cref{cor-aic}) or a given (\cref{cor:given}) initial credit.

\subsection{\kl{Multi-Dimensional Energy Parity Games}}
\AP
The \intro{multi-dimensional energy parity games}
are played
on finite \kl{multi-weighted game graphs} $(V,E,d)$ enriched with a
\intro{priority function} $\pi{:}\,V\to\+N_{>0}$; we let
$p$ %
be the number of distinct even
priorities.  Given an \intro{initial credit} $\vec c\in\+N^d$, we
denote by $\intro*\EnPgame_\vec c(V,E,d,p)$ the \kl{multi-dimensional energy parity game} where
Player~1 wins a play $v_0\xrightarrow{\vec w_1}v_1\xrightarrow{\vec
  w_2}v_2\cdots$ if it satisfies
\begin{itemize}
  \item the \intro{energy objective}: for all $i>0$, her energy level
    at step $i$ is non-negative on all components: $\vec c+\sum_{j\leq
      i}\vec w_j\geq\vec 0$, where comparisons are taken
    componentwise, and
  \item the \intro{parity objective}: the least priority $\pi(v_i)$ that
    appears infinitely often is odd;
\end{itemize}
Player~2 wins the play otherwise.  A \kl{multi-dimensional energy game}
ignores the parity condition---equivalently $\pi(v)=1$ for all $v\in V$.

\begin{example}
  Let us consider once more the graph of \cref{ex:wgg}.  Player~2
  wins the energy game with any initial credit: if Player~1
  eventually uses only the left (blue) loop, then the second component
  will eventually become negative, and similarly for the right
  (violet) loop and the first component.  Hence she must switch
  infinitely often between her two vertices using the middle (cyan)
  loop, but this decreases the 1-norm of her current energy level.
\end{example}

\begin{figure}[tbp]
  \centering
  \begin{tikzpicture}[auto,on grid]
    \node[triangle](3){$3$};
    \node[triangle, right=3.6cm of 3](2){$2$};
    \node[square,above right=.5 and 1.8 of 3](B){$2$};
    \node[square,below right=.5 and 1.8 of 3](4){$4$};
    \node[square, right=1.5 of 2](1){$1$};
    \path[->,every node/.style={font=\footnotesize,inner sep=1pt}]
      (3) edge[bend left=15,draw=black!40!cyan,pos=0.8] node{$0$} (B)
      (B) edge[bend left=15,draw=black!40!cyan,pos=0.2] node{$0$} (2)
      (2) edge[bend left=15,draw=black!40!cyan,pos=0.6] node{$0$} (4)
      (4) edge[bend left=15,draw=black!40!cyan,pos=0.35] node{$0$} (3)
      (2) edge[bend left,draw=black!40!violet] node{$-1$}(1)
      (1) edge[bend left,draw=black!40!violet] node{$0$}(2)
  ;
  \end{tikzpicture}
  \caption{\label{fig:mep}A $1$-weighted game graph with priorities.}
\end{figure}
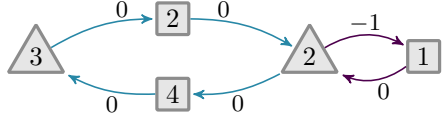
\begin{example}\label{ex:mep}
  Consider the $1$-weighted game graph with priorities
  of \cref{fig:mep}.  Player~$1$ is losing for all initial credits
  $c$ in this game: due to the energy objective the violet loop on the
  right can be played at most $c$ times, and eventually only the cyan
  loop on the left will be played, but then the parity objective is
  not satisfied by the play.
\end{example}

\subsection{\kl{Extended Multi-Dimensional Energy Games}}

\AP \kl{Extended multi-dimensional energy games} allow special weights
  (denoted by `$\omega$') that let Player~$1$ choose any value she
wants for the component.  Formally, let
$\+Z_\omega\eqdef\+Z\uplus\{\omega\}$; in infinity norms of the
extended multi-weights, $\omega$ is treated as~$1$.  An ""extended
(finite) multi-dimensional weighted game graph"" $(V,E,d)$, where
$E\subseteq (V_1\times\+Z_\omega^d\times V_2)\cup(V_2\times\+Z^d\times
V_1)$.  A play on such a graph is an infinite sequence
$v_0\xrightarrow{\vec w_1}v_1\xrightarrow{\vec w_2}v_2\cdots$ such
that $v_i\xrightarrow{\vec u_{i+1}}v_{i+1}\in E$ for all $0\leq i$ and
$\vec w_{i+1}$ instantiates $\vec u_{i+1}$ by replacing $\omega$'s
with values from $\+N$; strategies for Player~1 now have to specify
how to instantiate $\omega$'s to form plays.  Using the energy
objective as before to determine winners of plays, we obtain
the \intro{extended multi-dimensional energy game}
$\intro*\ExtEngame_\vec c(V,E,d)$ where $\vec c$ is the initial
credit~\citep{brazdil10}.

The following proposition shows how to get rid of priorities
in \kl{multi-dimensional energy parity games} at the price of extra
dimensions and the use of \kl[extended multi-dimensional energy
game]{extended games}: each even priority is associated with an extra
dimension, which is decremented by one upon entering a vertex with
this priority, and incremented by $\omega$ upon entering a vertex with
a smaller odd priority (a pair of additional vertices might need to be
introduced if the originating vertex was a Player~$2$ vertex);
see \cref{fig:eeg} for the extended multi-weighted game thus
constructed from \cref{fig:mep}.

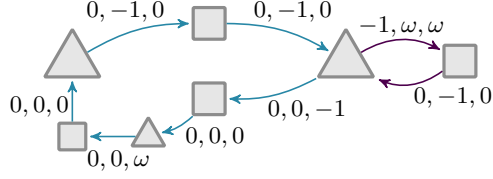
\begin{figure}[tbp]
  \centering
  \begin{tikzpicture}[auto,on grid]
    \node[triangle](3){\phantom{$3$}};
    \node[triangle, right=3.6cm of 3](2){\phantom{$2$}};
    \node[square,above right=.5 and 1.8 of 3](B){\phantom{$2$}};
    \node[square,below right=.5 and 1.8 of 3](4){\phantom{$4$}};
    \node[square, right=1.5 of 2](1){\phantom{$1$}};
    \node[square,below=1 of 3](A){};
    \node[triangle,right=1 of A](C){};
    \path[->,every node/.style={font=\footnotesize,inner sep=1pt}]
      (3) edge[bend left=15,draw=black!40!cyan,pos=0.8] node{$0,-1,0$} (B)
      (B) edge[bend left=15,draw=black!40!cyan,pos=0.2] node{$0,-1,0$} (2)
      (2) edge[bend left=15,draw=black!40!cyan,pos=0.6] node{$0,0,-1$} (4)
      (4) edge[bend left=15,draw=black!40!cyan,pos=0.35] node{$0,0,0$} (C)
      (C) edge[draw=black!40!cyan,pos=0.35] node[inner sep=5pt]{$0,0,\omega$} (A)
      (A) edge[draw=black!40!cyan,pos=0.35] node{$0,0,0$} (3)
      (2) edge[bend left,draw=black!40!violet] node{$-1,\omega,\omega$}(1)
      (1) edge[bend left,draw=black!40!violet] node{$0,-1,0$}(2)
  ;
  \end{tikzpicture}
  \caption{\label{fig:eeg}An extended $3$-weighted game graph encoding
  \cref{fig:mep}.}
\end{figure}

\begin{fact}[{\citet[Lemma~1]{jancar15}}]\label{prop-jancar}
  Let $(V,E,d)$ be a \kl{weighted game graph}, $v\in V$ an initial
  vertex, $\pi$ a \kl{priority function} with $p$ distinct even
  priorities, and $\vec c\in\+N^d$ an initial credit.  We can
  construct in logarithmic space an extended \kl{weighted game graph}
  $(V',E',d+p)$ with $V\subseteq V'$, $|V'|\leq 3|V|$,
  $|E'|\leq|E|+2|V|$, and $\|E'\|=\|E\|$ such that:
  \begin{enumerate}[(i)] 
  \item Player~1 wins $\EnPgame_\vec c(V,E,d,p)$ from $v$ if and only
  if there exists $\vec c'\in\+N^{p}$ such that she wins $\ExtEngame_{\vec c\vec c'}(V',E',d+p)$ from $v$, and
  \item Player~2 wins $\EnPgame_\vec c(V,E,d,p)$ from $v$ if and only
  if for all $\vec c'\in\+N^{p}$ he wins $\ExtEngame_{\vec c\vec c'}(V',E',d+p)$ from $v$.
  \end{enumerate}
\end{fact}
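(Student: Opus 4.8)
The plan is to turn the $p$ even priorities into $p$ extra energy coordinates and to recast the parity objective as staying non-negative on those coordinates. Concretely, I would keep the original $d$ coordinates unchanged and add one coordinate $\delta_q$ for each distinct even priority $q$. On every edge entering a vertex of priority $q$ I put $-1$ in coordinate $\delta_q$; on every edge entering a vertex of odd priority $r$ I put $\omega$ in each coordinate $\delta_q$ with $q>r$. Decrements are ordinary integers, so they may sit on an edge of either player, but an $\omega$-weight is legal only on an edge leaving a Player~$1$ vertex; hence when an odd vertex $u$ requiring replenishment lies in $V_1$ (its incoming edges come from $V_2$) I would insert once and for all a gadget $c_u\in V_1'$, $a_u\in V_2'$, reroute the edges formerly entering $u$ to $c_u$ (keeping their weight), and add the deterministic edges $c_u\to a_u$ carrying the $\omega$-increments and $a_u\xrightarrow{\vec 0}u$. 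Since at most one gadget is created per Player~$1$ vertex, this yields $|V'|\le 3|V|$ and $|E'|\le|E|+2|V|$; every new weight has magnitude at most $1$ (with $\omega$ read as $1$) while the original weights are untouched, so $\|E'\|=\|E\|$; and the whole construction is plainly computable in logarithmic space.

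Next I would isolate the single fact driving both implications. Fix an infinite play and let $r$ be its least priority seen infinitely often; after projecting away the gadget vertices the first $d$ running sums are exactly the original energy vector, so only the coordinates $\delta_q$ need analysis. If $r$ is \emph{even}, then $\delta_r$ is decremented infinitely often whereas every replenishment of $\delta_r$ requires a smaller (hence finitely-seen) odd priority, so $\delta_r\to-\infty$ no matter how the finitely many $\omega$'s are instantiated and no matter the initial credit. If $r$ is \emph{odd}, I will show instead that Player~$1$ can keep every $\delta_q$ non-negative. I would also note that, since both games are determined, it suffices to prove item~(i): item~(ii) is then its contrapositive, because Player~$2$ wins $\EnPgame_{\vec c}$ iff Player~$1$ does not, and Player~$2$ wins every $\ExtEngame_{\vec c\vec c'}$ iff Player~$1$ wins none.

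With this fact the $\Leftarrow$ direction of~(i) should fall out immediately: a Player~$1$ strategy winning $\ExtEngame_{\vec c\vec c'}(V',E',d+p)$, read on the original arena, keeps the first $d$ coordinates non-negative (the energy objective of $\EnPgame_{\vec c}$) and, by the even-$r$ analysis, cannot tolerate an even least-infinitely-often priority (the parity objective), so it wins $\EnPgame_{\vec c}(V,E,d,p)$.

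The hard part, and the step I expect to be the main obstacle, is the $\Rightarrow$ direction of~(i), where I must produce a \emph{single} credit $\vec c'$ together with a non-clairvoyant rule for instantiating the $\omega$'s: each replenishment of $\delta_q$ must be set high enough to cover all $\delta_q$-decrements before the next replenishment, and without a uniform bound no fixed $\vec c'$ would work. To obtain such a bound I would invoke finite-memory determinacy of multi-dimensional energy parity games for Player~$1$, take a finite-memory winning strategy $\sigma$, and pass to the finite product of $(V,E)$ with the memory of $\sigma$ in which Player~$1$'s moves are resolved. Since $\sigma$ is winning, every infinite path there has odd least-infinitely-often priority, which is equivalent to the statement that \emph{every cycle has odd minimum priority}. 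Consequently, for each even $q$, any cycle either avoids priority $q$ or passes through a vertex of smaller odd priority, i.e.\ carries a replenishment of $\delta_q$; hence along any $\sigma$-consistent play the number of $\delta_q$-decrements occurring before the first replenishment, and between two consecutive ones, is bounded by a constant $B$ (the number of product states). Then, letting Player~$1$ play $\sigma$, setting every $\omega$ to $B$, and taking $\vec c'\eqdef(B,\dots,B)$ keeps all last $p$ coordinates non-negative while preserving the original energy objective, giving the desired winning strategy in $\ExtEngame_{\vec c\vec c'}(V',E',d+p)$.
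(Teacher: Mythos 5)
The paper offers no proof of this statement: it is imported as a \emph{Fact} from \citep[Lemma~1]{jancar15}, accompanied only by an informal description of the construction and \cref{fig:eeg}, so there is no in-paper argument to compare against. Your construction coincides exactly with that description --- one extra coordinate per even priority, decremented on entering a vertex of that priority, replenished by $\omega$ on entering a vertex of smaller odd priority, with a two-vertex gadget inserted precisely where strict alternation forbids an $\omega$-weight on the incoming edge --- and your size and norm bounds match the statement. Your correctness argument is sound, but be explicit about what it imports: (a) deducing item~(ii) from item~(i) uses determinacy of both games, which deserves a word of justification (both winning conditions are Borel and the $\omega$-instantiations give only countable branching, so Martin's theorem applies); and (b) the crux of the hard direction --- producing one uniform bound $B$ so that instantiating every $\omega$ as $B$ with credit $(B,\dots,B)$ works --- rests on finite-memory sufficiency for Player~1 in multi-dimensional energy parity games, which is a genuine theorem of \citep{chatterjee14} and must be cited, not assumed; Jan\v{c}ar's own proof has to confront the same issue. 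Granting that, your product-plus-pigeonhole argument is correct: every reachable cycle in the product of the arena with the memory of $\sigma$ has odd minimal priority, so any replenishment-free stretch of a $\sigma$-consistent play contains at most $B$ decrements of each $\delta_q$. One phrasing gap: you bound the decrements ``before the first replenishment and between two consecutive ones'', but you must also bound them on the (possibly infinite) tail after the \emph{last} replenishment, which occurs when the odd priorities below $q$ are seen only finitely often; the same repeated-product-state argument covers this case. With these citations and that one sentence added, your proof stands as a correct self-contained substitute for the external reference.
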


\subsection{\kl{Arbitrary Initial Credit}}
\AP
We show how \kl{extended multi-dimensional energy games},
in case the \kl{initial credit} is existentially quantified for Player~1 (this is the \intro{arbitrary initial credit problem}),
can be solved efficiently by translating them to \kl{bounding games}.
The ideas behind the translation are simple: 
enable Player~1 to keep the energy bounded at all times 
by artificial decreasing self-loops, 
and to instantiate $\omega$ weights arbitrarily 
by encoding them as increasing self-loops.
However, the proof of correctness (cf.\ \cref{pr:aic})
is unexpectedly non-trivial and makes use of \kl{perfect half space games}.

The translation is an extension of the translation in
\citep[Section~2.3]{jurdzinski15}, which did not handle $\omega$ weights,
and performs the following:
\begin{itemize}
\item 
at every vertex owned by Player~$1$ and for every coordinate~$i$,
a self-loop is inserted whose weight is the negative unit vector $-\vec{e}_i$
(these make use of new dummy Player~$2$ vertices,
to meet the requirements of player alternation and weight determinacy);
\item
for every edge whose weight~$\vec u$ is not in~$\+Z^d$,
all $\omega$~coordinates in~$\vec u$ are replaced by~$0$,
and then a dummy Player 2 vertex is inserted succeeded by 
a new Player 1 vertex that has a self-loop of weight $\vec{e}_i$ 
for each coordinate~$i$ that was $\omega$ in~$\vec u$
(the latter make use of further dummy Player~$2$ vertices as before).
\end{itemize}
\Cref{fig:bnd} illustrates this construction on the right violet loop
of the graph of \cref{fig:eeg}.
\begin{figure}[tbp]
  \centering
  \begin{tikzpicture}[auto,on grid]
    \node[triangle](L){};
    \node[triangle, right=3.6cm of L](R){};
    \node[square,above right=.5 and 1.8 of L](A){};
    \node[square,below right=.5 and 1.8 of L](B){};
    \node[square, left=1.5 of L](L1){};
    \node[square, right=1.5 of R](R1){};
    \node[square, below=1.5 of L](L2){};
    \node[square, above=1.5 of L](L3){};
    \node[square, above=1.5 of R](R2){};
    \path[->,every node/.style={font=\footnotesize,inner sep=2pt}]
      (L) edge[bend left=15,draw=black!40!violet,pos=0.8] node{$-1,0,0$} (A)
      (A) edge[bend left=15,draw=black!40!violet,pos=0.2] node{$0,0,0$} (R)
      (R) edge[bend left=15,draw=black!40!violet,pos=0.6] node{$0,0,0$} (B)
      (B) edge[bend left=15,draw=black!40!violet,pos=0.35] node{$0,0,0$} (L)
      (L) edge[bend left=15] node{$-1,0,0$} (L1)
      (L1) edge[bend left=15] node{$0,0,0$} (L)
      (R) edge[bend left=15] node{$0,1,0$}(R1)
      (R1) edge[bend left=15] node{$0,0,0$}(R)
      (R) edge[bend left=15,near end] node{$0,0,1$}(R2)
      (R2) edge[bend left=15,near start] node{$0,0,0$}(R)
      (L) edge[bend left=15,near end] node{$0,-1,0$}(L2)
      (L2) edge[bend left=15,near start] node{$0,0,0$}(L)
      (L) edge[bend left=15,near end] node{$0,0,-1$}(L3)
      (L3) edge[bend left=15,near start] node{$0,0,0$}(L)
  ;
  \end{tikzpicture}
  \caption{\label{fig:bnd}Part of the translation of \cref{fig:eeg} for
  bounding games.}
\end{figure}
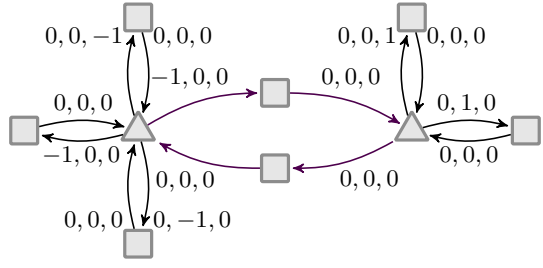

\begin{proposition}
\label{pr:aic}\intro[\symbol dagger]{}%
  Let $(V,E,d)$ be an \kl{extended multi-dimensional weighted game graph} and $v\in V$ an
  initial vertex.  We can construct in logarithmic space a \kl{multi-dimensional weighted
  game graph} $(V^\dagger,E^\dagger,d)$ with $V\subseteq V^\dagger$,
  $|V^\dagger|\leq (d+1)|V|+(d+2)|E|$, $|E^\dagger|\leq 2(d+1)|E|+2d|V|$, and
  $\|E^\dagger\|=\|E\|$ such that:
  \begin{enumerate}[(i)]
  \item\label{pr:aic.1} Player~1 wins $\ExtEngame_\vec c(V,E,d)$ 
    from $v$ for some $\vec c\in\+N^d$ if and only if she wins $\Bndgame(V^\dagger,E^\dagger,d)$ from $v$, and
  \item\label{pr:aic.2} Player~2 wins $\ExtEngame_\vec c(V,E,d)$ 
    from $v$ for all $\vec c\in\+N^d$ if and only if he wins $\Bndgame(V^\dagger,E^\dagger,d)$ from~$v$.
  \end{enumerate}
\end{proposition}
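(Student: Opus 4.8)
The plan is to use the determinacy we already have for both target games to collapse the four implications of \cref{pr:aic} to a single hard one, and then to attack that one through the perfect half space characterisation of bounding games. Throughout I fix the play-level dictionary between the two arenas: a play of $\Bndgame(V^\dagger, E^\dagger, d)$ is sent to a play of $\ExtEngame_\vec c(V, E, d)$ by contracting each maximal block of increasing self-loops $+\vec e_i$ at an $\omega$-gadget into a single instantiation of the corresponding $\omega$ by the block's length (always $\geq 0$, hence a legal value in $\+N$), and by deleting every decreasing self-loop $-\vec e_i$. Writing $W_n$ for the accumulated weight of the original ($\omega$-instantiated) edges after $n$ genuine steps, deletion of the decreasing loops can only raise each coordinate, so $W_n \geq W^\dagger_n$ coordinatewise, where $W^\dagger_n$ is the bounding-game weight. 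This already yields the implication ``Player~$2$ wins $\ExtEngame_\vec c$ for all $\vec c$ $\Rightarrow$ Player~$2$ wins $\Bndgame$'' in its contrapositive form: if Player~$1$ keeps $\|W^\dagger_n\| \leq B$ in the bounding game, the projected strategy keeps $W_n \geq -B\,\vec 1$, so she wins $\ExtEngame_\vec c$ with $\vec c = B\,\vec 1$. One only checks that a winning bounding strategy never loops forever on a single self-loop (that would diverge), so every block is finite and the projection is a genuine extended-energy strategy.

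Next I would dispose of half the statement for free. Each $\ExtEngame_\vec c$ is determined, and by \cref{th:tr.lmp} so is $\Bndgame(V^\dagger, E^\dagger, d)$; hence ``Player~$1$ wins for some $\vec c$'' is exactly the negation of ``Player~$2$ wins for all $\vec c$'', and ``Player~$1$ wins $\Bndgame$'' is the negation of ``Player~$2$ wins $\Bndgame$''. Consequently \cref{pr:aic}(\ref{pr:aic.1}) and \cref{pr:aic}(\ref{pr:aic.2}) are equivalent, and the projection above settles one direction of each. It remains to prove the converse, which I phrase on Player~$2$'s side: \emph{if Player~$2$ wins $\Bndgame(V^\dagger, E^\dagger, d)$ from $v$, then for every $\vec c$ he wins $\ExtEngame_\vec c(V, E, d)$ from $v$}. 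The size bounds $|V^\dagger| \leq (d+1)|V| + (d+2)|E|$, $|E^\dagger| \leq 2(d+1)|E| + 2d|V|$ and $\|E^\dagger\| = \|E\|$ are read off directly from the gadget counts, as is the logarithmic-space constructibility.

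For this last implication I pass through perfect half space games. By \cref{th:eq.bounding}, Player~$2$ wins $\PHSgame(\widehat{V^\dagger}, \widehat{E^\dagger}, d)$, and by \cref{th:tr.lmp}(\ref{th:tr.lmp.3}) he has a perfect-half-space-oblivious winning strategy $\hat\tau$. Since $\hat\tau$ ignores the announced half spaces, it induces a strategy for Player~$2$ in $\ExtEngame_\vec c$: against any Player~$1$ strategy and any credit, the resulting extended-energy play embeds via the dictionary above (using no decreasing loops) into a play consistent with $\hat\tau$, which is therefore winning in the sense of \cref{eq-phs}. Thus there is a \kl{partially perfect half space} $(\vec g_1, \dots, \vec g_k)$ with $k > 0$, eventually a prefix of the announced ones, such that $\limsup_n W_n \cdot \vec g_k = -\infty$ while $\liminf_n W_n \cdot \vec g_\ell < +\infty$ for all $\ell < k$.

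The main obstacle is the final step: upgrading this divergence \emph{in the direction $\vec g_k$} to divergence \emph{in a single coordinate}, which is what the energy objective actually constrains. This is exactly where the two families of self-loops earn their keep. Suppose for contradiction that $W_n \geq -\vec c$ in every coordinate, so that Player~$1$ survives. Then $W_n \cdot \vec g_k \to -\infty$ with all coordinates bounded below forces some coordinate $W_n(i)$ with $\vec g_k(i) < 0$ to be unbounded above. But the decreasing self-loop $-\vec e_i$ present at \emph{every} Player~$1$ vertex means she could have capped that coordinate; feeding such a capped Player~$1$ play to the \emph{same} oblivious $\hat\tau$ again yields a $\PHSgame$-winning play, and the mutual normality of $\vec g_1, \dots, \vec g_k$, together with the fact that a positive $\vec g_k(i)$ can only be sustained through an $\omega$-gadget, lets one absorb the offending coordinate and descend to a strictly shorter witness. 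Iterating drives the witness direction to a negative unit vector $-\vec e_{i^*}$, whence $\liminf_n W_n(i^*) = -\infty$ and Player~$1$'s energy goes negative regardless of $\vec c$, the contradiction we seek. This coordinate-peeling argument, in the spirit of \citep[Lemma~5.3]{jurdzinski15}, is the delicate core; everything else is the bookkeeping of the preceding paragraphs.
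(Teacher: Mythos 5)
Up to its last step, your proof follows the paper's own: the same contraction/projection dictionary between the two arenas, determinacy of the bounding game (\cref{th:eq.bounding}, \cref{th:tr.lmp}) to reduce everything to the single implication ``Player~2 wins $\Bndgame(V^\dagger,E^\dagger,d)$ $\Rightarrow$ Player~2 wins $\ExtEngame_{\vec c}(V,E,d)$ for every $\vec c$'', and the passage to $\PHSgame\tuple{\widehat{V^\dagger},\widehat{E^\dagger},d}$ with a positional, \kl{perfect half space oblivious} winning strategy. One secondary imprecision in the easy direction: the bounding objective only bounds each play separately, so to extract a single credit $\vec c=B\vec 1$ you need a bound $B$ that is \emph{uniform} over all plays consistent with Player~1's strategy; this is exactly what the paper imports from \citep[Lemma~3.1]{jurdzinski15}, and it is not automatic, since Player~1 in general needs memory in bounding games.

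The genuine gap is the final coordinate-peeling step, which is where the entire difficulty lives, and as described it fails for two concrete reasons. First, the descent is not well founded: the capped play is a \emph{different} play, and the witness that \cref{eq-phs} provides for it has no stated relation to $(\vec g_1,\dots,\vec g_k)$; nothing supports ``descend to a strictly shorter witness'' or that iteration ``drives the witness direction to a negative unit vector'' (witnesses are prefixes of half spaces of norm up to $|V^\dagger|\cdot\|E^\dagger\|$, not unit vectors), and in any case a statement about the witness of the last capped play says nothing about the original play $\pi$, which is the one assumed to satisfy $W_n\geq-\vec c$. Second, capping does not preserve that survival hypothesis, so the iteration cannot even restart: inserting $-\vec e_i$ self-loops lowers coordinate $i$ permanently, and if $\pi$ later descends from a high peak of coordinate $i$ back near $-\vec c(i)$, the capped play's coordinate $i$ drops below $-\vec c(i)$. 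The paper's proof never edits $\pi$ at all: it uses the decreasing self-loops only \emph{counterfactually}, to force every \kl{perfect half space} chosen by the oblivious positional strategy along $\widehat{\pi^\dagger}$ to contain all $-\vec e_i$ (otherwise Player~1 would defeat it by repeating such a loop forever), hence to be disjoint from the non-negative orthant $\+Q_{{\geq}0}^d$; the contradiction is then reached on $\pi$ itself via the geometric fact that, when the denotation of $(\vec g_1,\dots,\vec g_k)$ avoids the orthant, the map $\vec x\mapsto\vec x\cdot\vec g_k$ is bounded below on $\{\vec x \,:\, \vec c+\vec x\geq\vec 0 \text{ and } \vec x\cdot\vec g_\ell \text{ bounded for } \ell<k\}$, contradicting $\limsup_n W_n\cdot\vec g_k=-\infty$. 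You would need to replace your peeling step by this orthant-avoidance argument, or supply both a genuine termination measure and a capping scheme that preserves survival, to complete the proof.
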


\begin{proof}
Regarding %
the right-to-left implication in item~(\ref{pr:aic.1}),
by \citep[Lemma~3.1]{jurdzinski15}, Player~1 has a winning strategy $\sigma$
in $\Bndgame(V^\dagger,E^\dagger,d)$ from $v$ that 
ensures that all total weights are at most 
$B \eqdef (4 |V^\dagger| \cdot \|E^\dagger\|)^{2 (d + 2)^3}$.
In particular, $\sigma$ does not get stuck in any of the artificial self-loops.
Hence, $\sigma$ gives rise to a winning strategy 
in $\ExtEngame_{\tuple{B, \ldots, B}}(V,E,d)$ from~$v$.

By the determinacy of bounding games 
(cf.\ \cref{th:eq.bounding}, \cref{th:tr.lmp} and \cref{th:lmp}),
it now suffices to establish 
the right-to-left implication in item~(\ref{pr:aic.2}).
Let $\tau$ be a winning strategy of Player~2 
in the perfect half space game 
$\PHSgame\tuple{\widehat{V^\dagger},\widehat{E^\dagger},d}$, 
that is positional and perfect half space oblivious,
and let $\overline{\tau}$ be its projection 
onto the extended graph $\tuple{V, E, d}$.

Consider any $\vec c\in\+N^d$, any play $\pi$ 
in $\ExtEngame_\vec c(V,E,d)$ from $v$
that is consistent with $\overline{\tau}$,
and let $\widehat{\pi^\dagger}$ be a play 
in $\PHSgame\tuple{\widehat{V^\dagger},\widehat{E^\dagger},d}$ from $v$
that corresponds to $\pi$ (i.e., where the instantiations of $\omega$ weights 
in $\pi$ are reproduced by the corresponding increasing self-loops).
Observe that any Player 2 vertex $v'$ in $\pi$
also occurs in $\widehat{\pi^\dagger}$,
and so the perfect half space chosen by $\tau$ at $v'$
must contain every negative unit vector $-\vec{e}_i$
(otherwise, Player~1 could proceed to win by repeating forever 
one of the artificial self-loops at the successor of~$v'$),
i.e., be disjoint from the non-negative orthant~$\+Q_{{\geq} 0}^d$.

Since $\tau$ is winning, there exists a partially perfect half space
$\tuple{\vec{g}_1, \ldots, \vec{g}_k}$ which is 
a prefix of all perfect half spaces that are 
chosen by $\tau$ along a suffix of $\widehat{\pi^\dagger}$,
and there exist $a_1$, $b_1$, \ldots, $a_{k - 1}$, $b_{k - 1}$ such that:
\begin{itemize}
\item 
the dot products of the total weights along $\widehat{\pi^\dagger}$
with $\vec{g}_k$ are unbounded below, and
\item
for every $\ell = 1, \ldots, k - 1$, 
the dot products of the total weights along $\widehat{\pi^\dagger}$
with $\vec{g}_\ell$ are in the interval $[a_\ell, b_\ell]$.
\end{itemize}
Hence, for the sequence of total weights along $\pi$ 
with $\vec{c}$ subtracted, the same holds.
But, by the observation above, the denotation of 
$\tuple{\vec{g}_1, \ldots, \vec{g}_k}$ 
is disjoint from the non-negative orthant, implying that
\begin{equation*}\{\vec{x} \cdot \vec{g}_k \,:\,
    \vec{c} + \vec{x} \geq \vec{0} \text{ and }\forall 1\leq\ell<k,\,
    \vec{x} \cdot \vec{g}_\ell \,\in\, [a_\ell, b_\ell]\}\end{equation*}
is bounded below.
We conclude that the total weights along $\pi$ 
are not contained in $\+N^d - \vec{c}$, 
showing as required that $\overline{\pi}$ 
is a winning strategy in $\ExtEngame_\vec c(V,E,d)$.
\end{proof}

From \cref{cor-bnd}, \cref{prop-jancar} and \cref{pr:aic}, we obtain
our first improved upper bound.

\begin{corollary}\label{cor-aic}
The \kl{arbitrary initial credit problem} for \kl{multi-dimensional
energy parity games} on
$\tuple{V, E, d}$ with
$p$ even priorities is solvable in
\ifshort\relax\else deterministic \fi time 
$(|V| \cdot \|E\|)^{O((d + p)^3 \log (d + p))}$.
\end{corollary}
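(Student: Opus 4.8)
The plan is to walk down the Player~1 (existential) side of the chain in \cref{fig:reds}, from \kl{multi-dimensional energy parity games} to \kl{bounding games}, and then invoke \cref{cor-bnd}. First I would apply \citeauthor{jancar15}'s reduction (\cref{prop-jancar}). The \kl{arbitrary initial credit problem} asks whether Player~1 wins $\EnPgame_\vec c(V,E,d,p)$ from $v$ for \emph{some} $\vec c\in\+N^d$; by item~(i) of \cref{prop-jancar} this holds if and only if there exist $\vec c\in\+N^d$ and $\vec c'\in\+N^p$ with Player~1 winning $\ExtEngame_{\vec c\vec c'}(V',E',d+p)$ from $v$. In other words, it reduces to the \kl{arbitrary initial credit problem} for the \kl{extended multi-dimensional energy game} on the graph $(V',E',d+p)$ of dimension $d+p$ produced in logarithmic space, with $|V'|\leq 3|V|$, $|E'|\leq|E|+2|V|$ and $\|E'\|=\|E\|$.

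Next I would apply \cref{pr:aic} to $(V',E',d+p)$: by item~(\ref{pr:aic.1}), Player~1 wins $\ExtEngame(V',E',d+p)$ for some initial credit if and only if she wins the \kl{bounding game} $\Bndgame(V^\dagger,E^\dagger,d+p)$ from $v$, where $(V^\dagger,E^\dagger,d+p)$ is again built in logarithmic space with $|V^\dagger|\leq(d+p+1)|V'|+(d+p+2)|E'|$, $|E^\dagger|\leq 2(d+p+1)|E'|+2(d+p)|V'|$ and $\|E^\dagger\|=\|E\|$. Composing the two reductions, both $|V^\dagger|$ and $|E^\dagger|$ are bounded by polynomials in $|V|$, $|E|$ and $d+p$ whose dependence on $d+p$ is only linear, while the norm stays at $\|E\|$. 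The decision procedure then solves this one bounding game via \cref{cor-bnd} in time $(|V^\dagger|\cdot\|E^\dagger\|)^{O((d+p)^3)}$; the logarithmic-space reductions do not affect this bound.

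The only non-routine step is to rewrite this running time as the claimed $(|V|\cdot\|E\|)^{O((d+p)^3\log(d+p))}$. Setting $N\eqdef|V|\cdot\|E\|$ and using $|E|=O(|V|^2)$ together with $\|E\|\geq 1$, the base satisfies $|V^\dagger|\cdot\|E^\dagger\|\leq(d+p)\cdot N^{O(1)}$, so the time is at most $\bigl((d+p)\cdot N^{O(1)}\bigr)^{O((d+p)^3)}=(d+p)^{O((d+p)^3)}\cdot N^{O((d+p)^3)}$. The second factor is already $N^{O((d+p)^3)}$, and the first factor equals $2^{O((d+p)^3\log(d+p))}$, which is absorbed into $N^{O((d+p)^3\log(d+p))}$. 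Thus the extra $\log(d+p)$ in the exponent, as against the bound of \cref{cor-bnd} for plain \kl{bounding games}, comes exactly from re-expressing the polynomial-in-$(d+p)$ blow-up of the vertex count (the $(d{+}p{+}1)$ and $(d{+}p{+}2)$ factors of \cref{pr:aic}), raised to the cubic exponent $(d+p)^3$, as a power of $N$. I expect the composition of the size bounds, rather than any conceptual difficulty, to be the main thing to get right.
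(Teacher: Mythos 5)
Your proposal is correct and follows exactly the paper's own route: the paper derives \cref{cor-aic} by composing \cref{prop-jancar}, \cref{pr:aic} and \cref{cor-bnd}, precisely the chain you describe. Your bookkeeping of the size bounds, and in particular the observation that the $\log(d+p)$ factor in the exponent arises from absorbing the $(d+p)^{O((d+p)^3)}$ blow-up into a power of $|V|\cdot\|E\|$, is the intended (and left implicit) calculation.
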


We also deduce that Player~$2$ has positional winning strategies
in \kl{multi-dimensional energy parity games} with arbitrary initial
credit; this could already be derived by \cref{prop-jancar} from the
case of extended energy games with arbitrary initial credit, shown in
Lemma~19 in the \href{http://arxiv.org/abs/1002.2557}{arXiv version}
of~\citep{brazdil10}.%

\subsection{\kl{Given Initial Credit}}
\AP The \intro{given initial credit problem} for \kl{multi-dimensional
energy parity games} takes as input a \kl{multi-weighted game graph}
$\tuple{V,E,d}$, a \kl{priority function} $\pi$, an initial vertex $v$,
and an \kl{initial credit} $\vec c$ in $\+N^d$ and asks whether
Player~1 wins the \kl{multi-dimensional energy parity game}
$\EnPgame_{\vec c}(V,E,d,\pi)$ from~$v$.

Following \citep[Lemma~3.4]{jurdzinski15}, we show that any 
\kl{multi-dimensional energy
parity game} with a \kl{given initial credit} is equivalent to a \kl{bounding
game} played over a doubly-exponentially larger graph in terms of $d$,
and exponentially larger in terms of~$p$.
\begin{lemma}%
\label{lem-init}%
  \intro[\symbol ddagger]{}%
  be a \kl{multi-weighted game graph}, $\pi$ a \kl{priority function}
  with $p$ distinct even priorities, and $v\in V$.  One can construct
  in time $O(|V^\ddagger|\cdot|E|+d\cdot\log\|\vec c\|)$
  a \kl{multi-weighted game graph}
  $\tuple{V^\ddagger,E^\ddagger,d+p}$ and a vertex $v_{\vec c}$ in
  $V^\ddagger$, where $|V^\ddagger|$ is in
  $(|V|\cdot\|E\|)^{2^{O(d\log (d+p))}}$ and $\|E^\ddagger\|=\|E\|$
  such that, for all $i\in\{1,2\}$, Player~$i$ wins
  the \kl{multi-dimensional energy parity game} $\EnPgame_{\vec
  c}(V,E,d,\pi)$ from $v$ if and only if Player~$i$ wins
  the \kl{bounding game} $\Bndgame(V^\ddagger,E^\ddagger,d+p)$
  from~$v_{\vec c}$.
\end{lemma}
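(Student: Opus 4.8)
The plan is to compose \cref{prop-jancar} with a capping construction extending \citep[Lemma~3.4]{jurdzinski15}. First I would apply \cref{prop-jancar} to absorb the \kl{priority function}~$\pi$ into $p$ fresh energy dimensions, obtaining in logarithmic space an extended \kl{weighted game graph} $(V',E',d+p)$ of polynomial size for which Player~$1$ wins $\EnPgame_{\vec c}(V,E,d,\pi)$ from~$v$ exactly when she wins $\ExtEngame_{\vec c\vec c'}(V',E',d+p)$ from~$v$ for some $\vec c'\in\+N^p$, and Player~$2$ wins exactly when he wins for every such~$\vec c'$. This reduces the problem to a single extended energy game in $d+p$ dimensions in which the first $d$ coordinates carry the \emph{given} credit~$\vec c$, the last $p$ carry an \emph{existentially quantified} credit~$\vec c'$, and the only $\omega$~weights sit in those last $p$ coordinates.

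Next I would turn this extended game into the \kl{bounding game} $\Bndgame(V^\ddagger,E^\ddagger,d+p)$ by recording a capped energy vector in the vertices. Each coordinate stores the running energy clamped to the interval $[0,B]$ for a bound~$B$ to be fixed below: an attempt to go negative is a cap violation that routes the play into a gadget whose total weight diverges (so Player~$2$ wins), while any excess above~$B$ is simply truncated. The weights of $E^\ddagger$ are the corresponding \emph{lossy} weights, i.e.\ the actual change of the clamped energy; since truncation only shrinks positive increments, these never exceed the original weights in norm, which is exactly why $\|E^\ddagger\|=\|E\|$. Crucially, the total weight of a play then equals its clamped energy and hence stays within $[0,B]^{d+p}$ precisely when no cap is ever violated. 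The initial vertex $v_{\vec c}$ is the configuration whose first $d$ coordinates read~$\vec c$ (reached through a short binary preamble, which accounts for the $d\cdot\log\|\vec c\|$ summand in the running time), while a further preamble lets Player~$1$ set the $p$ parity coordinates to arbitrary values in $[0,B]$, thereby realising the existential choice of~$\vec c'$; the $\omega$~weights are instantiated during play as in \cref{pr:aic}, by letting her increment the relevant parity coordinates by arbitrary (capped) amounts.

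For the bound itself I would invoke the self-covering / Rackoff-style analysis underlying \citep[Lemma~3.4]{jurdzinski15}: in a $(d+p)$-dimensional energy game, if Player~$1$ can keep all energies non-negative at all, she can do so while never exceeding the doubly-exponential bound $B=(|V|\cdot\|E\|)^{2^{O(d\log(d+p))}}$. Capping the $d+p$ coordinates at~$B$ then yields $(B+1)^{d+p}$ configurations, which dominates the claimed count $|V^\ddagger|\in(|V|\cdot\|E\|)^{2^{O(d\log(d+p))}}$, doubly-exponential in~$d$ and exponential in~$p$.

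It remains to prove the equivalence, and since \kl{bounding games} are determined (cf.\ \cref{th:eq.bounding}, \cref{th:tr.lmp} and \cref{th:lmp}) and \kl{multi-dimensional energy parity games} have a Borel, hence determined, winning condition, it suffices to argue the statement for Player~$1$, the case of Player~$2$ following by complementation. One direction is immediate: a clamped-energy strategy that never violates a cap keeps all total weights in $[0,B]^{d+p}$, so it wins the bounding game, and since clamping only lowers energy it also keeps the true energies non-negative and hence wins the energy parity game. The converse is where the bound does the work, and I expect it to be the main obstacle: one must show that no winning behaviour is lost by clamping, i.e.\ that if Player~$1$ wins $\EnPgame_{\vec c}$ at all then she wins while staying below~$B$ (this is precisely the self-covering argument), and that the $\omega$-instantiation and the existential set-up of~$\vec c'$ interact correctly with the clamping so that the equivalence holds simultaneously for both players.
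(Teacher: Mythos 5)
Your overall route is the same as the paper's: first apply \cref{prop-jancar} to trade the priorities for $p$ extra energy dimensions whose credit is existentially quantified, then adapt the capping construction of \citep[Lemma~3.4]{jurdzinski15} to turn the resulting extended energy game into a bounding game, with \cref{pr:aic} standing in for \citep[Proposition~2.2]{jurdzinski15}. The gap is in the quantitative step, and it sits exactly where this lemma improves on a direct application of \citep[Lemma~3.4]{jurdzinski15}. In that construction the caps are not a single uniform bound justified by a self-covering argument: they are defined by a recursion that absorbs one \emph{initialised} dimension at a time, each new cap being pseudo-polynomial in the size of the graph already blown up by the previous caps, so that $k$ rounds of the recursion in ambient dimension $D$ yield caps of magnitude $(|V|\cdot\|E\|)^{2^{O(k\log D)}}$. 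Applied naively to your capped game, in which \emph{all} $d+p$ coordinates are clamped and recorded in the state, this gives $|V^\ddagger|\in(|V|\cdot\|E\|)^{2^{O((d+p)\log(d+p))}}$, not the claimed $(|V|\cdot\|E\|)^{2^{O(d\log(d+p))}}$. Your uniform cap $B=(|V|\cdot\|E\|)^{2^{O(d\log(d+p))}}$ simply asserts the refined bound: nothing in the Rackoff-style analysis you invoke explains why the inner exponent is $d$ rather than $d+p$.

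The missing observation --- and the actual content of the paper's proof sketch --- is that the blow-up recursion need only be run on the $d$ coordinates carrying the \emph{given} credit $\vec c$: the $p$ parity coordinates produced by \cref{prop-jancar} have arbitrary credit and $\omega$-weights, so they should not be capped into the state at all, but left to the bounding objective exactly as in \cref{pr:aic}, where the increasing self-loops realise $\omega$ and \citep[Lemma~3.1]{jurdzinski15} supplies a pseudo-polynomial (not doubly exponential) bound for them. This is also what makes clamping those coordinates sound in the first place: between two $\omega$-resets the number of decrements of a parity coordinate is a priori unbounded, so the fact that Player~1 can win while keeping these coordinates below a fixed bound is a consequence of the bounding-game machinery, not of a self-covering argument on the energy game; your proposal defers precisely this point (``interact correctly with the clamping'') without an argument. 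Once the recursion is restricted to the $d$ initialised coordinates --- see Eq.~(9) in the arXiv version of \citep{jurdzinski15} --- the stated bound follows; without that restriction, your construction either proves only the weaker $2^{O((d+p)\log(d+p))}$ exponent or rests on an unproven claim.
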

\begin{proof}[Proof sketch]
  We use the same arguments as in the proof
  of~\citep[Lemma~3.4]{jurdzinski15}.  The only difference is that we
  need to handle the parity condition, and thus to go through \kl{extended multi-dimensional 
  energy games} and replace \citep[Proposition~2.2]{jurdzinski15} with the
  combination of \cref{prop-jancar,pr:aic}.  These only incur a
  polynomial overhead in the size of the weighted game graphs, hence a
  bound for $|V^\ddagger|$ in $(|V|\cdot\|E\|)^{2^{O(d^\ddagger\log
  d^\ddagger)}}$ with $d^\ddagger\eqdef d+p$ can be deduced directly
  from~\citep[Lemma~3.4]{jurdzinski15}.

  We refine this bound by observing that only the first $d$ components
  of the \kl[bounding game]{$(d+p)$-dimensional bounding game} we
  construct should be treated as initialised, while the $p$
  remaining ones in \cref{prop-jancar} are arbitrary, hence the
  blowing-up construction of~\citep[Lemma~3.4]{jurdzinski15} only
  needs to be applied $d$ times, yielding instead a bound in
  $(|V|\cdot\|E\|)^{2^{O(d\log d^\ddagger)}}$; see Eq.~(9) in
  the \href{https://arxiv.org/abs/1502.06875}{arXiv version}
  of \citep{jurdzinski15}.
\end{proof}
By applying \cref{cor-bnd} to the game graph
$\tuple{V^\ddagger,E^\ddagger,d+p}$ and since $|E|\leq |V|^2$, we
obtain a \EXP[2] upper bound on the \kl{given initial credit problem},
which is again pseudo-polynomial when $d$ and $p$ are fixed.
\begin{restatable}{corollary}{corinit}
\label{cor:given}
The \kl{given initial credit problem} with \kl{initial credit} $\vec c$ for
\kl{multi-dimensional energy parity games} on $\tuple{V, E, d}$ with
$p$ even priorities is
solvable in
\ifshort\relax\else deterministic \fi time 
\begin{align*}
(|V| \cdot \|E\|)^{2^{O(d \cdot \log (d+p))}} + O(d \cdot \log \|\vec{c}\|)\ .
\end{align*}
\end{restatable}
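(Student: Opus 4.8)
The plan is to chain the two reductions established immediately before the statement, so no new game-theoretic argument is required. First I would invoke \cref{lem-init} to transform the \kl{given initial credit problem} for $\EnPgame_{\vec c}(V,E,d,\pi)$ into the bounding game $\Bndgame(V^\ddagger,E^\ddagger,d+p)$ over the enlarged arena, recording the three quantities that matter: the vertex count $|V^\ddagger|=(|V|\cdot\|E\|)^{2^{O(d\log(d+p))}}$, the preserved norm $\|E^\ddagger\|=\|E\|$, and the construction time $O(|V^\ddagger|\cdot|E|+d\cdot\log\|\vec c\|)$. The correctness of the reduction (Player~$i$ wins one game if and only if she wins the other) is exactly what \cref{lem-init} supplies.

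Next I would feed this bounding game to the algorithm of \cref{cor-bnd}(i), which solves a $(d+p)$-dimensional instance in time $(|V^\ddagger|\cdot\|E^\ddagger\|)^{O((d+p)^3)}$. Substituting the bounds above yields a running time of $\left((|V|\cdot\|E\|)^{2^{O(d\log(d+p))}}\cdot\|E\|\right)^{O((d+p)^3)}$, and since $\|E\|\leq|V|\cdot\|E\|$ the trailing factor is harmlessly absorbed into the doubly-exponential exponent.

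The only real work is the arithmetic of the exponents. Here I would observe that $(d+p)^3=2^{3\log(d+p)}$, so multiplying the exponent $2^{O(d\log(d+p))}$ by $O((d+p)^3)$ keeps it of the form $2^{O(d\log(d+p))}$, using $d\geq 1$ so that the $d\log(d+p)$ term dominates the additive $3\log(d+p)$. This collapses the solving time to $(|V|\cdot\|E\|)^{2^{O(d\cdot\log(d+p))}}$. For the construction cost, $|E|\leq|V|^2\leq(|V|\cdot\|E\|)^2$ shows that the term $|V^\ddagger|\cdot|E|$ is already swallowed by the same double exponential, while the $O(d\cdot\log\|\vec c\|)$ summand---the sole place where the magnitude of the \kl{initial credit} enters---must be kept additive rather than folded into the exponent. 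Adding the two contributions gives precisely the stated bound $(|V|\cdot\|E\|)^{2^{O(d\cdot\log(d+p))}}+O(d\cdot\log\|\vec c\|)$. The main obstacle, such as it is, is this bookkeeping: one must verify that the polynomial $(d+p)^3$ overhead of solving bounding games genuinely vanishes inside the double exponential, and that the credit-dependent term stays outside it, so that the bound remains pseudo-polynomial once $d$ and $p$ are fixed and collapses to the \EXP[2] upper bound in general.
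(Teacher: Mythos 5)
Your proposal is correct and follows exactly the paper's own route: reduce via \cref{lem-init} to the bounding game $\Bndgame(V^\ddagger,E^\ddagger,d+p)$, solve it with \cref{cor-bnd}, and observe (using $|E|\leq|V|^2$ and $\|E^\ddagger\|=\|E\|$) that the polynomial $O((d+p)^3)$ exponent and the construction cost are absorbed into the doubly-exponential term while the $O(d\cdot\log\|\vec c\|)$ term stays additive. The paper compresses this into a single sentence; your write-up merely makes the exponent bookkeeping explicit, and does so correctly.
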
\noindent
This matches the \EXP[2] lower bound from \citep{courtois14}, and
generalises \citep[Theorem~3.5]{jurdzinski15} to \kl{multi-dimensional
energy parity games}.  Because the \kl{given initial credit problem}
for \kl{energy games} of fixed dimension $d\geq 4$ and number of even
priorities $p=0$ is already \EXP-hard~\citep{courtois14}, there is no
hope of improving the pseudo-polynomial bound in \cref{cor:given} to a
polynomial one.

\section{Concluding Remarks}
In this paper, we have shown a chain of reductions and strategy
transfers from \kl{multi-dimensional energy parity games} to \kl{perfect half
space games} and \kl{lexicographic energy games}, see \cref{fig:reds}.

There are two main outcomes.  On the complexity side, we obtain
tighter upper bounds for \kl{multi-dimensional energy parity games}, both
with \kl[arbitrary initial credit]{arbitrary} and \kl{given initial credit}.
In particular, in addition
to closing the complexity gap with \kl{given initial credit}, our \EXP[2]
upper bound in \cref{cor:given} also closes complexity gaps for
several problems already mentioned in the introduction:
\begin{itemize}
\item deciding \kl{extended multi-dimensional energy games} with
  given initial input~\citep{brazdil10},
\item deciding whether a Petri net weakly simulates a finite state
  system, or satisfies a formula of the $\mu$-calculus fragment
  defined in~\citep{abdulla13}, and
\item deciding the model-checking problem for RB$\pm$ATL~\citep{alechina16}.
\end{itemize}

The second outcome is a rather precise description of the winning
strategies for Player~$2$ in these games.  Here, the perfect half
space viewpoint is especially enlightening: Player~$2$ can win by
`announcing' in which perfect half spaces it is attempting to escape.

\appendix
\section{}%

\subsection{Proof of \cref{th:lmp}}\label{app-lmpg}
\lmpgpositional*
\begin{proof}
  We prove the lemma for Player~$2$ (in mean-payoff terminology, Min);
  the argument for Player~$1$ (Max) is analogous.  For this, let us
  fix ourselves a positional optimal strategy for Min in the
  mean-payoff game $(V, E^{(1)})$. We show that this strategy is also
  winning for Player 2 in the \kl{lexicographic energy game}
  $\LexEngame(V, E, d)$.  Hence, in the rest of the proof, we consider
  a play~$P$ consistent with this strategy in $\LexEngame(V, E, d)$,
  and we aim at showing that it is winning.
  
  \AP Let $C_1, C_2, \dots$ be the infinite sequence of simple cycles
  obtained by the `\intro{cycle decomposition}' of the play~$P$: we
  start with an empty sequence of cycles, we then push successive
  vertices of the play on a stack, and each time we push a vertex that
  is already present on the stack, we pop the resulting simple cycle
  from the top of the stack and add it to the sequence of simple
  cycles.  Observe ($\star$) that every simple cycle $C_1, C_2, \dots$
  has total multi-weight $\prec \vec 0$.  Indeed, as a cycle in the
  strategy subgraph of an optimal strategy for Min in the
  \kl{mean-payoff game} $\MPgame(V, E^{(1)})$ with a negative value,
  it has a negative total weight~\citep{zwick96}, and hence the
  observation ($\star$) follows by~\cref{prop:sign-pres}.

  \AP For a cycle~$C$ in the \kl{multi-weighted game graph} $(V, E,
  d)$, call the \intro{leading dimension} the least~$k=1,\dots, d$
  such that $\vec w(C)(k)\neq 0$ (recall that $\vec w(C)$ is the total
  \kl{multi-weight} of the edges in the cycle).  The \kl{leading
    dimension}~$k^*$ of the play~$P$ is the smallest dimension that is
  the \kl{leading dimension} of infinitely many
  cycles~$C_1,C_2,\dots$; note that in the proof for Player~$1$, $k^*$
  can equal $d+1$.

  The core of the proof is now contained in the following
  claims~\ref{cl-kstar} and~\ref{cl-k}.
  \begin{claim}\label{cl-kstar}
  For all $1 \leq\ell < k^*$,
  $\liminf_{n} \sum_{i=1}^n \vec w(C_i)(\ell) < +\infty$. 
  \end{claim}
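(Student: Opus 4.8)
The plan is to argue directly from the definition of the \kl{leading dimension}~$k^*$, without even invoking observation~($\star$): this claim is purely combinatorial bookkeeping about which coordinates of the cycle sums vanish. Fix any dimension~$\ell$ with $1\leq\ell<k^*$. The key observation is that whenever a cycle~$C_i$ has \kl{leading dimension} strictly greater than~$\ell$, its $\ell$-th coordinate $\vec w(C_i)(\ell)$ must be zero: by definition the \kl{leading dimension} is the \emph{first} nonzero coordinate of~$\vec w(C_i)$, so every coordinate appearing before it---in particular coordinate~$\ell$---vanishes. Contrapositively, the only cycles that can contribute a nonzero value in dimension~$\ell$ are those whose \kl{leading dimension} is at most~$\ell$.

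First I would bound the number of such cycles. Since $\ell<k^*$ and $k^*$ is by definition the \emph{smallest} dimension that is the \kl{leading dimension} of infinitely many of the $C_1,C_2,\dots$, each of the dimensions $1,\dots,\ell$ is the \kl{leading dimension} of only finitely many cycles. Taking the finite union over these $\ell$ dimensions, only finitely many cycles of the whole decomposition have \kl{leading dimension} at most~$\ell$. Hence there is an index~$N$ such that every cycle~$C_i$ with $i>N$ has \kl{leading dimension} strictly greater than~$\ell$, and therefore $\vec w(C_i)(\ell)=0$.

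It then follows immediately that the partial sums stabilise: for all $n\geq N$,
\begin{equation*}
  \sum_{i=1}^n \vec w(C_i)(\ell)=\sum_{i=1}^N \vec w(C_i)(\ell),
\end{equation*}
a fixed finite value. In particular the sequence converges, so $\liminf_{n} \sum_{i=1}^n \vec w(C_i)(\ell)<+\infty$, which is exactly the assertion of the claim.

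I do not expect any serious obstacle here: this is the easy half of the two core claims, and it relies only on the definition of~$k^*$ together with the fact that coordinates strictly below a cycle's leading dimension are null. The genuine difficulty---where observation~($\star$), that every $C_i$ is lexicographically negative, becomes essential---lies instead in the companion \cref{cl-k}, which must establish that the partial sums in the critical dimension~$k^*$ actually diverge to~$-\infty$, and then in transferring these statements about cycle sums back to the prefix sums of the play~$P$ that figure in the winning condition (the two differ only by the bounded contribution of the at most $|V|$ vertices currently sitting on the \kl{cycle decomposition} stack).
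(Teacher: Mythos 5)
Your proof is correct and follows essentially the same route as the paper's: both argue that, by minimality of~$k^*$, all but finitely many cycles have leading dimension strictly greater than~$\ell$, hence vanishing $\ell$-th coordinate, so the partial sums $\sum_{i=1}^n \vec w(C_i)(\ell)$ are eventually constant and their inferior limit is finite. You merely spell out the step the paper compresses into ``by the definition of~$k^*$'' (taking the finite union over dimensions $1,\dots,\ell$), and your side remarks---that observation~($\star$) is only needed for \cref{cl-k}, and that the leftover stack contributes a bounded discrepancy---match the paper's own treatment.
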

 \begin{proof}[Proof of \cref{cl-kstar}]%
  Indeed, by the definition of~$k^*$, for all $\ell < k^*$,
  we have that $\vec w(C_i)(\ell) = 0$ for all sufficiently large~$i$.
  Hence the sequence of sums $\sum_{i=1}^n \vec w(C_i)(\ell)$ 
  is eventually constant, so its inferior limit is finite.%
  \end{proof}%

  \begin{claim}\label{cl-k} 
  If $k^*\leq d$, then $\limsup_{n} \sum_{i=1}^n \vec w(C_i)(k^*) = -\infty$.
  \end{claim}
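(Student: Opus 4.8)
The plan is to combine observation~$(\star)$ with the definition of the \kl{leading dimension}~$k^*$, and to read off the asymptotics of the partial sums along coordinate~$k^*$ from the signs of the per-cycle contributions.

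First I would pin down the sign of the $k^*$-th coordinate of each cycle's total weight. Since every cycle $C_i$ satisfies $\vec w(C_i) \prec \vec 0$ by~$(\star)$, the \kl{leading dimension} of $C_i$ is exactly the position of the first nonzero coordinate of $\vec w(C_i)$, and that coordinate is strictly negative; as all weights are integers, it is in fact $\leq -1$. In particular, whenever the \kl{leading dimension} of $C_i$ equals $k^*$, we have $\vec w(C_i)(k^*) \leq -1$.

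Next I would use minimality of $k^*$. By definition $k^*$ is the smallest dimension that is the \kl{leading dimension} of infinitely many cycles, so every $\ell < k^*$ is the \kl{leading dimension} of only finitely many cycles. Hence there is an index $N$ beyond which the \kl{leading dimension} of every $C_i$ is at least $k^*$. For such $i > N$ this not only gives $\vec w(C_i)(\ell) = 0$ for all $\ell < k^*$ (which is what \cref{cl-kstar} exploited), but also forces $\vec w(C_i)(k^*) \leq 0$: the coordinate is $0$ when the \kl{leading dimension} lies strictly above $k^*$, and $\leq -1$ when it equals $k^*$.

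Finally I would assemble the two facts about the partial sums $S_n \eqdef \sum_{i=1}^n \vec w(C_i)(k^*)$. For $n > N$ every increment $\vec w(C_{n+1})(k^*)$ is $\leq 0$, so the tail $(S_n)_{n > N}$ is non-increasing; and since $k^*$ is the \kl{leading dimension} of infinitely many cycles, infinitely many increments are $\leq -1$. A non-increasing integer sequence that decreases by at least $1$ infinitely often diverges to $-\infty$, whence $\lim_n S_n = -\infty$ and therefore $\limsup_n S_n = -\infty$, as required. The argument is essentially immediate once~$(\star)$ is available; the only point deserving care is the eventual monotonicity of the partial sums, which is precisely what lets one replace the $\limsup$ by an ordinary limit and conclude divergence at once.
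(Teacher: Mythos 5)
Your proof is correct and follows essentially the same route as the paper's: both derive from $(\star)$ and the minimality of $k^*$ that $\vec w(C_i)(k^*)\leq 0$ for all sufficiently large $i$ and $\leq -1$ for infinitely many $i$, and conclude divergence of the partial sums. Your write-up merely spells out the integrality and eventual-monotonicity details that the paper leaves implicit.
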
\begin{proof}[Proof of \cref{cl-k}]%
  Indeed, from the definition of~$k^*$ and the fact that
  every cycle in the decomposition has total weight $\prec \vec 0$,
  we have that $\vec w(C_i)(k^*)$ is:
  \begin{itemize}
  \item
  $\leq -1$ for infinitely many~$i$;
  \item
  $\leq 0$ for all sufficiently large~$i$. 
  \end{itemize}
  The sequence of sums $\sum_{i=1}^n \vec w(C_i)(k)$ 
  therefore has limit superior~$-\infty$.%
  \end{proof}%
  
  From the two above claims, we get that the play~$P$ is won by Min.
  The reader may worry that the expressions of the form 
  `$\sum_{i=1}^n \vec w(C_i)(\ell)$' differ from those in the
  definition of lexicographic energy games: there might be a non-empty
  simple path remaining indefinitely `on the stack' of the cycle
  decomposition, and thus not taken into account.
  However, if we want just to determine whether 
  the corresponding limit inferior (superior, resp.)
  is less than $+\infty$ (equal to $-\infty$, resp.), 
  then the discrepancy is benign because, for every simple path~$P'$, 
  we have $|\vec w(P')(k)| \leq |V| \cdot \|E\|$.
\end{proof}  

\subsection{Proof of \cref{th:tr.lmp.1}}\label{app-trlmp}
\lemtrlmp*
\begin{proof}
  Consider any infinite play
  \begin{align*}
   P&\eqdef \tuple{v_1,\vec
    H_1}\xrightarrow{\vec w_1}\tuple{v_2,\vec H_2}\xrightarrow{\vec
    w_2}\cdots
  \intertext{in the perfect half space game~$\PHSgame\tuple{\widehat{V},
    \widehat{E}, d}$ along with its corresponding play}
  \widetilde P&\eqdef
  \tuple{v_1, \vec H_1}
  \xrightarrow{\vec{e}_{\vec H_1, \vec H_2} \shuffle (\vec w_1 \cdot
    \vec H_1)}
  \tuple{v_2, \vec H_2}
  \xrightarrow{\vec{e}_{\vec H_2, \vec H_3} \shuffle (\vec w_2 \cdot
    \vec H_2)}
  \cdots
  \end{align*}
  in the lexicographic energy game
  $\LexEngame\tuple{\widehat{V}, \widetilde{E}, 2d}$.

  We show that $P$ is winning for Player~$i$ in
  $\PHSgame\tuple{\widehat{V}, \widehat{E}, d}$ if and only if $\widetilde
  P$ is winning for the same player in $\LexEngame\tuple{\widehat{V},
    \widetilde{E}, 2d}$.  This in turn entails that winning
  strategies for each player can be transferred between the two games.

  It suffices to show this for Player~$2$.  If $\widetilde P$ is winning
  for Player~$2$, then there exists $1\le k\le 2d$ such that
  $\limsup_n\sum_{j=1}^n(\vec e_{\vec H_j,\vec H_{j+1}}\shuffle\vec
  w_j \cdot \vec H_j)(k)=-\infty$ and for all $1\le\ell<k$,
  $\liminf_n\sum_{j=1}^n(\vec e_{\vec H_j,\vec H_{j+1}}\shuffle \vec
  w_j \cdot \vec H_j)(\ell)<+\infty$.  Since the coefficients $\vec
  e_{\vec H_j,\vec H_{j+1}}$ are all non-negative, $k$ cannot
  correspond to one of these dimensions. Hence $k$ is even; let $\hat
  k\eqdef k/2$.  Because $\liminf_n\sum_{j=1}^n(\vec e_{\vec H_j,\vec
    H_{j+1}}\shuffle \vec w_j \cdot \vec H_j)(\ell)<+\infty$ for all
  \emph{odd} $1\le \ell< k$, we deduce that the visited perfect half
  spaces $\vec H_1,\vec H_2,\dots$ differ on their first $\hat k$
  coordinates only finitely many times.  Hence there is an infinite
  suffix of the play where all the perfect half spaces share a common
  prefix $(\vec g_1,\dots,\vec g_{\hat k})$.  Then
  $\limsup_n\sum_{j=1}^n\vec w_j\cdot\vec g_{\hat
    k}=\limsup_n\sum_{j=1}^n(\vec e_{\vec H_j,\vec
    H_{j+1}}\shuffle\vec w_j \cdot \vec H_j)(2\hat k)=-\infty$ and for
  all $1\le\hat\ell<\hat k$, $\liminf_n\sum_{j=1}^n\vec w_j\cdot\vec
  g_{\hat\ell}=\liminf_n\sum_{j=1}^n(\vec e_{\vec H_j,\vec
    H_{j+1}}\shuffle \vec w_j \cdot \vec H_j)(2\hat\ell)<+\infty$,
  hence $P$ is also winning for Player~$2$ in the perfect half space
  game.

  Conversely, if $P$ is winning for Player~$2$ in the perfect half
  space game~$\PHSgame\tuple{\widehat{V}, \widehat{E}, d}$, then there
  is an infinite suffix starting at some index $i$ with $\vec
  G\preceq\phs_{j\geq i}\vec H_j$ satisfying \cref{eq-phs}, and let
  $k\eqdef|\vec G|$.  The $k$ first odd coordinates of the weights in
  the corresponding infinite suffix in~$\widetilde P$ are thus all
  $0$, hence the energy will not diverge on these coordinates.
  Furthermore, the $k$ first even coordinates in the same suffix are
  such that $\limsup_n\sum_{j=i}^n(\vec e_{\vec H_j,\vec
    H_{j+1}}\shuffle\vec w_j\cdot\vec H_j)(2k)=\limsup_n\sum_{j=i}^n
  \vec w_j\cdot\vec g_k=-\infty$ and, for all $1\le\ell<k$,
  $\liminf_n\sum_{j=i}^n(\vec e_{\vec H_j,\vec H_{j+1}}\shuffle\vec
  w_j\cdot\vec H_j)(2\ell)=\liminf_n\sum_{j=i}^n \vec w_j\cdot\vec
  g_\ell<+\infty$.  Thus $\widetilde P$ is also winning for Player~$2$
  in $\LexEngame\tuple{\widehat{V}, \widetilde{E}, 2d}$.
\end{proof}

\subsection{Proof of \cref{th:tr.lmp}(\ref{th:tr.lmp.3})}

\paragraph*{Key Remark}\intro[key remark]{}%
For a strategy~$\tau'$ of Player~$2$, we say that a path
is \intro{$(\tau',v)$-elementary path} if it is consistent with~$\tau'$,
it starts in some $(v,\HS)$,
ends in some $(v,\HS')$, and does not visit the vertex~$v$ in
between.  Consider
a \kl{$(\tau,v)$-elementary path} $P$ starting in~$(v,\HS)$ and ending
in $(v,\HS')$.  Then there is a \kl{$(\tau_\HS,v)$-elementary path}
$P^{\HS'}$ that is exactly like $P$ but for the fact that it begins
in~$(v,\HS')$.  This one happens to be a \emph{cycle} consistent with
$\tau_H$.  Then we clearly have
$\phspath(P)\leqpref\phspath(P^{\HS'})$,
since every \kl{perfect half space} that occurs in $P^{\HS'}$ already
occurs in $P$.  Since furthermore $\weight(P^{\HS'})=\weight(P)$, this
means that if "$P$ is winning for Player~$2$@winning play PHS", then the same holds
for~$P^{\HS'}$.

\begin{restatable}{claim}{clngphs}\label{cl-ngphs}
  If a \kl{perfect half space}~$\HS$ is \kl{bad} then
there exists a \kl{$(\tau,v)$-elementary path} starting in~$(v,\HS)$
and \kl[pwin]{losing for Player~$2$}.
\end{restatable}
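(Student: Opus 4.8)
The plan is to derive the statement directly: assuming $\HS$ is \kl{bad}, I would first extract a single simple cycle through~$v$ that is \kl[pwin]{losing for Player~$2$}, and then transport it back to~$(v,\HS)$ by reading the \kl{key remark} in reverse.

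The first ingredient is an observation: \emph{every play consistent with $\tau_\HS$ that starts in $(v,\HS)$ and is winning for Player~$1$ visits~$v$ infinitely often.} Indeed, were $v$ visited only finitely often, the suffix $\Sigma$ of the play after the last occurrence of~$v$ would avoid~$v$ entirely; relabelling its initial \kl{perfect half space} to~$\HS$ then yields an infinite play $\widetilde\Sigma$ starting in $(v,\HS)$ and consistent with~$\tau$ — the very first move coincides because $\tau_\HS$ is \kl{oblivious at $v$}, and all later moves never touch~$v$, where $\tau_\HS$ and $\tau$ agree. Since $v$ is in the winning region and $\tau$ is winning there, $\widetilde\Sigma$ is winning for Player~$2$; as $\Sigma$ and $\widetilde\Sigma$ differ in only one position and the condition of \cref{eq-phs} is insensitive to finite modifications of the weights and half spaces, $\Sigma$ is winning for Player~$2$ as well, contradicting that it is a suffix of a play won by Player~$1$.

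Now, because $\HS$ is \kl{bad}, $\tau_\HS$ is not winning from $(v,\HS)$, so in the one-player game obtained by fixing Player~$2$ to the \kl[positional strategy]{positional} strategy~$\tau_\HS$, Player~$1$ wins from $(v,\HS)$. By the \kl{positional determinacy} of \kl{perfect half space games} (\cref{th:lmp,th:tr.lmp.1}), Player~$1$ has there a positional winning strategy, which played against the forced~$\tau_\HS$ produces a lasso whose simple cycle~$C$ is reachable from $(v,\HS)$ and is \kl[pwin]{losing for Player~$2$}, i.e.\ $\weight(C)\cdot\phspath(C)\succeq\vec 0$ (recall that the infinite iteration of a cycle is won by a player exactly when the cycle is). By the observation, this lasso play returns to~$v$ infinitely often, so~$C$ must pass through~$v$; with base point $(v,\HS')$ it is therefore a cycle consistent with~$\tau_\HS$ that does not meet~$v$ in between.

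Finally I would invoke the \kl{key remark} in reverse: relabelling the base point of~$C$ from $(v,\HS')$ to $(v,\HS)$ produces a \kl{$(\tau,v)$-elementary path}~$\widehat C$ starting in $(v,\HS)$, with $\weight(\widehat C)=\weight(C)$ and $\phspath(\widehat C)\leqpref\phspath(C)$, since inserting~$\HS$ into the collection of half spaces traversed can only shorten their longest common prefix. Because truncating a vector that is $\succeq\vec 0$ for the \kl{lexicographic ordering} leaves it $\succeq\vec 0$, we obtain $\weight(\widehat C)\cdot\phspath(\widehat C)\succeq\vec 0$, so $\widehat C$ is the sought \kl{$(\tau,v)$-elementary path} \kl[pwin]{losing for Player~$2$}. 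I expect the crux to be the extraction step: ensuring the losing cycle can be chosen through~$v$, which is precisely where the observation on infinitely many returns to~$v$ — and thus the winningness of~$\tau$ combined with the shift-invariance of \cref{eq-phs} — is indispensable.
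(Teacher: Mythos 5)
Your overall route is sound and is essentially the paper's own argument, repackaged: the paper extracts the $(\tau,v)$-elementary path as the prefix of the play up to its \emph{first} return to~$v$, and then observes that, both strategies being positional and $\tau_{\vec H}$ oblivious at~$v$, the play thereafter repeats the corresponding cycle forever; you instead go straight to the lasso's eventual cycle and pull it back to $(v,\vec H)$. Both versions rest on the same ingredients: a positional counter-strategy for Player~1 (via \cref{th:lmp,th:tr.lmp.1}), the fact that a play against $\tau_{\vec H}$ which eventually avoids~$v$ would be consistent with~$\tau$ and hence won by Player~2 (your opening observation is the paper's first case, strengthened to ``infinitely many returns''), and the relabelling step in which the weight is preserved while the longest common prefix can only shrink (the paper's key remark, used contrapositively).

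There is, however, one unjustified step, and the reasoning it implicitly relies on is incorrect even though the conclusion is true. You assert that the simple cycle~$C$, based at $(v,\vec H')$, ``does not meet $v$ in between'', apparently as a consequence of simplicity. But simplicity lives in $\widehat V$: the cycle could a priori pass through two \emph{distinct} vertices $(v,\vec H_a)$ and $(v,\vec H_b)$ with $\vec H_a\neq\vec H_b$, both projecting to the same underlying vertex~$v$. If that happened, your last step would collapse: the relabelled path $\widehat C$ would not be a $(\tau,v)$-elementary path (it visits~$v$ in its interior), and it would not even be consistent with~$\tau$, since at an interior $(v,\vec H_b)$ the strategy $\tau_{\vec H}$ plays $\tau$'s move at $(v,\vec H)$ rather than $\tau$'s move at $(v,\vec H_b)$. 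The claim can be repaired, but it needs an argument of exactly the kind you use elsewhere: since $v\in V_2$ and $\tau_{\vec H}$ is oblivious at~$v$, any two $v$-vertices on the play have the \emph{same} successor in~$\widehat V$; by positionality of both strategies, two occurrences of $v$-vertices inside one period of the lasso would then force two equal vertices strictly inside that period, contradicting the minimality (simplicity) of~$C$. Note that the paper sidesteps this issue by construction: taking the segment up to the \emph{first} return to~$v$ makes ``no interior visit to~$v$'' automatic, and obliviousness plus positionality are invoked only to see that the play goes on to repeat that very cycle.
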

\begin{proof}[Proof of \cref{cl-ngphs}]
  Indeed, if $\HS$ is bad, there exists a play resulting from playing
  a \kl{strategy} for Player~$1$ from~$(v,\HS)$ against $\tau_\HS$,
  which is winning for Player~$1$; by \cref{th:lmp,th:tr.lmp.1} we can
  assume this strategy to be positional.  Two cases may happen: Either
  this play never visits~$v$ (except at the initial position).  In
  this case, this play was already a play consistent with~$\tau$,
  contradicting the fact that the strategy~$\tau$ was winning from
  $(v,\vec H)$.  Otherwise, the infinite play encounters at least once
  more some vertex $(v,\HS')$.  Let~$P$ be the prefix of the play from
  $(v,\HS)$ to $(v,\HS')$.  This is a \kl{$(\tau,v)$-elementary path}.
  Since $P$ has been obtained from the fight of a positional strategy
  for Player~$1$ against $\tau_\HS$, the infinite play ultimately
  repeats the cycle~$P^{\HS'}$.  Thus $P^{\HS'}$ is losing for
  Player~$2$. According to the above \kl{key remark}, $P$ was thus
  already losing for Player~$2$.
\end{proof}

\clgphs*
\begin{proof}[Proof of \cref{cl-gphs}]\AP\phantomintro{proof:cl-gphs}%
Assume for the sake of contradiction that all \kl{perfect half spaces}
are \kl{bad}.  We shall prove that in this case~$\tau$ was losing
from~$(v,\HS)$.  Let us fix for all \kl{perfect half spaces}~$\HS$
a \kl{$(\tau,v)$-elementary path}~$P(\HS)$ starting from~$(v,\HS)$
ending in some $(v,f(\HS))$ and losing for Player~$2$ (it exists
according to \cref{cl-ngphs}).  Let us now construct a play consistent
with~$\tau$ starting from~$(v,\HS)$ as follows: assuming the partial
play constructed so far ends in~$(v,\HS)$, we extend it by
concatenating the path $P(\HS)$ to it, yielding a longer play ending
in $(v,f(\HS))$.  We iterate this process and, going to the limit, we
obtain an infinite play~$P$ consistent with~$\tau$.  However, this
play is an infinite concatenation of finitely many $P(\HS)$ paths,
which are all losing for Player~$2$.  Hence~$P$ is losing for
Player~$2$.  This contradicts the fact that $\tau$ was assumed to be
winning from~$(v,\HS)$. The claim is proved: there has to be
a \kl{good perfect half space}.
\end{proof}

\section*{Acknowledgements}
Work funded in part by 
the EPSRC grants EP/M011801/1 and EP/P020992/1, and
the ERC grant 259454 (GALE).
The authors thank Jo\"el Ouaknine and Prakash Panangaden 
for organising and hosting the Infinite-State Systems workshop
at Bellairs Research Institute in March 2015, where this work
started.

\bibliographystyle{abbrvnat}
\bibliography{journalsabbr,conferences,energy}

\end{document}